\theoremstyle{plain}
\renewcommand{\epsilon}{\ensuremath{\varepsilon}}
\let\doendproof\endproof
\renewcommand\endproof{~\hfill\qed\doendproof}
\DeclareMathOperator{\xx}{\mathrm{x}}
\DeclareMathOperator{\yy}{\mathrm{y}}
\renewcommand{\paragraph}[1]{\smallskip\noindent\textbf{#1}\xspace}
\Crefname{observation}{Observation}{Observations}
\Crefname{algorithm}{Algorithm}{Algorithms}
\Crefname{section}{Section}{Sections}
\Crefname{observation}{Observation}{Observations}
\Crefname{lemma}{Lemma}{Lemmas}
\Crefname{claim}{Claim}{Claims}
\Crefname{figure}{Fig.}{Figs.}
\Crefname{figure}{Fig.}{Figs.}
\Crefname{enumi}{Property}{Properties}
\newcommand{\remove}[1]{}
\definecolor{realblue}{rgb}{0,0,1}
\definecolor{blue}{rgb}{0.274,0.392,0.666}
\definecolor{darkerblue}{rgb}{0.094,0.455,0.804}
\definecolor{darkblue}{rgb}{0.063,0.306,0.545}
\definecolor{red}{rgb}{0.627,0.117,0.156}
\definecolor{green}{rgb}{0,0.588,0.509}
\definecolor{orange}{rgb}{0.903,0.739,0.382}
\definecolor{realred}{rgb}{1,0,0}
\newcommand{\blue}[1]{{{\textcolor{darkblue}{#1}\xspace}}}
\newcommand{\darkblue}[1]{{{\textcolor{darkblue}{#1}\xspace}}}
\renewcommand{\emph}[1]{\darkblue{\em #1}}
\newcommand{\rome}{$^1$}
\newcommand{\ottawa}{$^2$}
\title{On the Area Requirements of Planar Greedy Drawings of Triconnected Planar Graphs\thanks{Partially supported by the MSCA-RISE project ``CONNECT'', N$^\circ$~734922, by the NSERC of Canada, and by the MIUR-PRIN project ``AHeAD'', N$^\circ$~20174LF3T8.}}
\author{
Giordano {Da Lozzo}\rome,
Anthony {D'Angelo}\ottawa, 
\and
Fabrizio Frati\rome
}
\institute{
\rome Roma Tre University, Rome, Italy $\cdot$
\href{mailto:giordano.dalozzo@uniroma3.it,fabrizio.frati@uniroma3.it}{firstname.lastname@uniroma3.it}\\
\ottawa Carleton University, Ottawa, Canada $\cdot$ \href{mailto:anthonydangelo@cmail.carleton.ca}{anthonydangelo@cmail.carleton.ca}
}
\begin{document}
\maketitle

\begin{abstract}
In this paper we study the area requirements of planar greedy drawings of triconnected planar graphs. Cao, Strelzoff, and Sun exhibited a family $\cal H$ of subdivisions of triconnected plane graphs and claimed that every planar greedy drawing of the graphs in $\mathcal H$ respecting the prescribed plane embedding requires exponential area. However, we show that every $n$-vertex graph in $\cal H$ actually has a planar greedy drawing respecting the prescribed plane embedding on an $O(n)\times O(n)$ grid. This reopens the question whether triconnected planar graphs admit planar greedy drawings on a polynomial-size grid. Further, we provide evidence for a positive answer to the above question by proving that every $n$-vertex Halin graph admits a planar greedy drawing on an $O(n)\times O(n)$ grid. Both such results are obtained by actually constructing drawings that are convex and angle-monotone. Finally, we consider $\alpha$-Schnyder drawings, which are angle-monotone and hence greedy if $\alpha\leq 30\degree$, and show that there exist planar triangulations for which every $\alpha$-Schnyder drawing with a fixed $\alpha<60\degree$ requires exponential area for any resolution rule.  
\end{abstract}


\section{Introduction}
Let $(M,d)$ be a geometric metric space, where $M$ is a set of points and $d$ is a metric on $M$. A \emph{greedy embedding} of a graph $G$ into $(M,d)$ is a function $\phi$ that maps each vertex $v$ of $G$ to a point $\phi(v)$ in $M$ in such a way that, for every ordered pair $(u,v)$ of vertices of $G$, there is a distance-decreasing path from $u$ to $v$ in $G$, i.e., a path $(u=w_1,w_2,\dots,w_k=v)$ such that $d\big(\phi(w_i),\phi(v)\big)>d(\phi\big(w_{i+1}),\phi(v)\big)$, for $i=1,\dots,k-1$. Greedy embeddings, introduced by Rao et al.~\cite{rpss-grwli-03}, support a simple and local routing scheme, called \emph{greedy routing}, in which a vertex forwards a packet to any neighbor that is closer to the packet's destination than itself.
In order for greedy routing to be efficient, a greedy embedding should be \emph{succinct}, i.e., a polylogarithmic number of bits should be used to store the coordinates of each vertex. A number of algorithms have been proposed to construct succinct greedy embeddings of graphs~\cite{eg-sggruhg-11,gs-sggrep-09,hz-osgdptt-14,m-dgra-07,DBLP:journals/comcom/SunZFZ17,wh-sscgd3pg-14}. Notably, every graph admits a succinct greedy embedding into the hyperbolic plane~\cite{eg-sggruhg-11}. A natural choice is the one of considering $M$ to be the Euclidean plane $\mathbb R^2$ and $d$ to be the Euclidean distance $\ell_2$. Within this setting, not every graph~\cite{pr-crgr-05,DBLP:conf/algosensors/PapadimitriouR04}, and not even every binary tree~\cite{lm-srgems-10,np-egdt-13}, admits a greedy embedding; further, there exist trees whose every greedy embedding requires a polynomial number of bits to store the coordinates of some of the vertices~\cite{adf-sgd-12}.

From a theoretical point of view, most research efforts have revolved around two conjectures posed by Papadimitriou and Ratajczak~\cite{pr-crgr-05,DBLP:conf/algosensors/PapadimitriouR04}. The first one asserts that every $3$-connected planar graph admits a \emph{greedy drawing}, i.e., a straight-line drawing in $\mathbb R^2$ that induces a greedy embedding into $(\mathbb R^2,\ell_2)$. This conjecture has been confirmed independently by Leighton and Moitra~\cite{lm-srgems-10} and by Angelini et al.~\cite{afg-acgdt-10}. The second conjecture, which strengthens the first one, asserts that every $3$-connected planar graph admits a greedy drawing that is also convex. While this conjecture is still open, it has been recently proved by the authors of this paper that every $3$-connected planar graph admits a {\em planar} greedy drawing~\cite{ddf-opgdtpg-18}. 

An interesting question is whether succinctness and planarity can be achieved simultaneously. That is, does every $3$-connected planar graph admit a planar, and possibly convex, greedy drawing on a polynomial-size grid? Cao, Strelzoff, and Sun~\cite{css-osggrep-09} claimed a negative answer by exhibiting a family~$\cal H$ of subdivisions of $3$-connected plane graphs and by showing that, for any $n$-vertex graph in~$\cal H$, any planar greedy drawing that respects the prescribed plane embedding requires $2^{\Omega(n)}$ area and hence~$\Omega(n)$ bits for representing the coordinates of some vertices. 

Subsequently to the definition of greedy drawings, a number of more constrained graph drawing standards have been introduced and studied. Analogously to greedy drawings, they all concern straight-line drawings in $\mathbb R^2$. In a \emph{self-approaching} drawing~\cite{acglp-sag-12,dfg-icgps-15,npr-osaicd-16}, for every pair of vertices $u$ and $v$, there is a self-approaching path from $u$ to $v$, i.e., a path $P$ such that $\ell_2(a,c)>\ell_2(b,c)$, for any three points $a$, $b$, and $c$ in this order along $P$. In an \emph{increasing-chord} drawing~\cite{acglp-sag-12,dfg-icgps-15,npr-osaicd-16}, for every pair of vertices $u$ and $v$, there is a path from $u$ to $v$ which is self-approaching both from $u$ to $v$ and from $v$ to $u$. In an \emph{angle-monotone} drawing~\cite{bbcklv-gt-16,dfg-icgps-15,DBLP:conf/cccg/LubiwO17,lm-clr-19}, for every pair of vertices $u$ and $v$, there exists a $\beta$-monotone path from $u$ to $v$ for some angle $\beta$, i.e., a path $P=(w_1=u,w_2,\dots,w_k=v)$ such that, for each $i=1,\dots,k-1$, the edge $(w_i,w_{i+1})$ lies in the closed $90\degree$-wedge centered at $w_i$ and bisected by the ray originating at $w_i$ with slope $\beta$. Note that an angle-monotone drawing is increasing-chord, an increasing-chord drawing is self-approaching, and a self-approaching drawing is greedy. The first implication was proved in~\cite{dfg-icgps-15}, while the other two descend from the definitions. Finally, a notable class of planar straight-line drawings are \emph{$\alpha$-Schnyder} drawings~\cite{npr-osaicd-16}, which are angle-monotone if $\alpha\leq 30\degree$ and will be formally defined later.

{\bf Our contributions.} We show that every $n$-vertex graph in the family $\cal H$ defined by Cao et al.~\cite{css-osggrep-09} actually admits a convex angle-monotone drawing that respects the prescribed plane embedding and that lies on an $O(n)\times O(n)$ grid. This refutes their claim that every planar greedy drawing of an $n$-vertex graph in $\cal H$ requires~$\Omega(n)$ bits for representing the coordinates of some vertices and reopens the question about the existence of succinct planar greedy drawings of $3$-connected planar graphs. Further, we provide an indication that this question might have a positive answer by proving that the $n$-vertex Halin graphs, a notable family of triconnected planar graphs, admit convex angle-monotone drawings on an $O(n)\times O(n)$ grid. Finally, we show that there exist bounded-degree planar triangulations whose every $\alpha$-Schnyder drawing requires exponential area, for any fixed $\alpha<60\degree$. This result was rather surprising to us, as any planar triangulation admits a $60\degree$-Schnyder drawing on an $O(n) \times O(n)$ grid~\cite{DBLP:conf/soda/Schnyder90}; further, although $30\degree$-Schnyder drawings have been proved to exist for all stacked triangulations, our result shows that they are not the right tool to obtain succinct planar greedy drawings.



\section{Definitions and Preliminaries}\label{se:preliminaries}

A \emph{straight-line drawing} of a graph maps each vertex to a point in the plane and each edge to a straight-line segment between its end-points. A drawing is \emph{planar} if no two edges cross. A planar drawing partitions the plane into connected regions, called \emph{faces}. The only unbounded face is the \emph{outer face}; the other faces are \emph{internal}. Two planar drawings of the same connected planar graph are \emph{equivalent} if they determine the same circular order of the edges incident to each vertex. A \emph{planar embedding} is an equivalence class of planar drawings.
A \emph{plane graph} is a planar graph equipped with a planar embedding and a designated outer face.
A straight-line drawing is \emph{convex} if it is planar and every face is delimited by a convex polygon. A \emph{grid drawing} is such that each vertex is mapped to a point with integer coordinates. The \emph{width} (resp.\ \emph{height}) of a grid drawing is the number of grid columns (rows) intersecting it. We say that a drawing \emph{lies on a $W\times H$ grid} if it is a grid drawing with width $W$ and height $H$. The \emph{area} of a graph drawing is usually defined as the area of the smallest axis-parallel rectangle enclosing the drawing (when proving upper bounds) or as the area of the smallest convex polygon enclosing the drawing (when proving lower bounds). Any constraint implying a finite minimum area for a graph drawing is called a \emph{resolution rule}. 

From here on out, we measure angles in radians. In a straight-line drawing of a graph, the \emph{slope} of an edge $(u,v)$ is the angle spanned by a counter-clockwise rotation around $u$ of a ray originating at $u$ and directed rightwards bringing the ray to overlap with $(u,v)$; hence, the edge slopes are in the range $[0,2\pi)$. We denote by $(\xx(v),\yy(v))$ the point in the plane representing a vertex $v$ in a drawing of a graph.

A \emph{planar triangulation} $G$ is a plane graph whose every face is bounded by a $3$-cycle. Denote by $(a_1,a_2,a_3)$ the $3$-cycle bounding the outer face of $G$.
A \emph{Schnyder wood} $(\mathcal T_1,\mathcal T_2, \mathcal T_3)$ of $G$ is an assignment of directions and colors $1$, $2$ and~$3$ to the internal edges of $G$ such that the following two properties hold; see the figure below and refer to~\cite{DBLP:conf/soda/Schnyder90}. \mbox{Let $i-1 = 3$, if $i=1$, and let $i+1=1$, if $i=3$.}

\begin{description}
	\item[Property (1)] Each internal vertex $v$ has one outgoing edge $e_i$ of each color $i$, with $i = 1, 2, 3$. The outgoing edges $e_1$, $e_2$, and $e_3$ appear in this clockwise order at $v$. Further, all the incoming edges of color $i$ appear in the clockwise sector between the edges $e_{i+1}$ and $e_{i-1}$. 
	\item[Property (2)] At the external vertex $a_i$, all the internal edges are incoming and of color $i$. 
	\begin{center}
	\begin{tabular}{ccc}
		\centering
			\includegraphics[page=1]{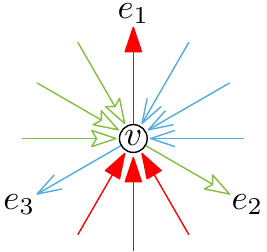}
			&~~~~~~~~~~~~~~~~&
			\includegraphics[page=2]{schnyder-conditions}
			\\
			Property (1) 			
			&~~~~~~~~~~~~~~~~& 
			Property (2)
	\end{tabular}
	\end{center}
\end{description}
		

\noindent
For $0 < \alpha \leq \frac{\pi}{3}$, 
a planar straight-line drawing of $G$ is an \emph{$\alpha$-Schnyder drawing} if, for
each internal vertex $v$ of $G$, 
its outgoing edge in $\mathcal T_1$ has direction in $[\frac{\pi}{2} - \frac{\alpha}{2} , \frac{\pi}{2} + \frac{\alpha}{2} ]$,  
its outgoing edge in $\mathcal T_2$ has direction in $[\frac{11\pi}{6} -\frac{\alpha}{2},\frac{11\pi}{6} + \frac{\alpha}{2}]$, and 
its outgoing edge in $\mathcal T_3$ has direction in $[\frac{7\pi}{6} - \frac{\alpha}{2},\frac{7\pi}{6} + \frac{\alpha}{2}]$. Observe that, by definition, in an $\alpha$-Schnyder drawing, for each internal vertex $v$ of $G$, 
its incoming edges in $\mathcal T_1$, if any, have direction in $[\frac{3\pi}{2} - \frac{\alpha}{2} , \frac{3\pi}{2} + \frac{\alpha}{2} ]$,
its incoming edges in $\mathcal T_2$, if any, have direction in $[\frac{5\pi}{6} - \frac{\alpha}{2} , \frac{5\pi}{6} + \frac{\alpha}{2} ]$, and
its incoming edges in $\mathcal T_3$, if any, have direction in $[\frac{\pi}{6} - \frac{\alpha}{2} , \frac{\pi}{6} + \frac{\alpha}{2} ]$.
\cref{fig:wedges} shows the angular widths of an $\alpha$-Schnyder drawing. ``Usual'' Schnyder drawings~\cite{DBLP:conf/soda/Schnyder90} are $60\degree$-Schnyder drawings; see, e.g.,~\cite{d-gdt-10}.


\remove{
\begin{figure}[tb!]
	\centering
	\subfloat[inner vertex]{
		\includegraphics[page=1]{schnyder-conditions}
		\label{fig:schnyderInner}
	}
	\hfil
	\subfloat[outer vertices]{
		\includegraphics[page=2]{schnyder-conditions}
		\label{fig:schnyderOuter}
	}
	\caption{
		The two conditions for a Schnyder wood. 
	}
	\label{FIG:schnyderWood}
\end{figure}
}

\section{Angle-Monotone Drawings of Cao-Strelzoff-Sun Graphs}\label{sse:Cao-Strelzoff-Sun}
Cao et al.~\cite{css-osggrep-09} defined the following family $\cal H$ of plane graphs. For every integer $i \geq 1$, the plane graph $\mathfrak{H}_i  \in \mathcal H$ on $3i+4$ vertices is inductively defined as follows:
\begin{wrapfigure}[6]{R}{6cm}
	\includegraphics[page=5,scale=0.65]{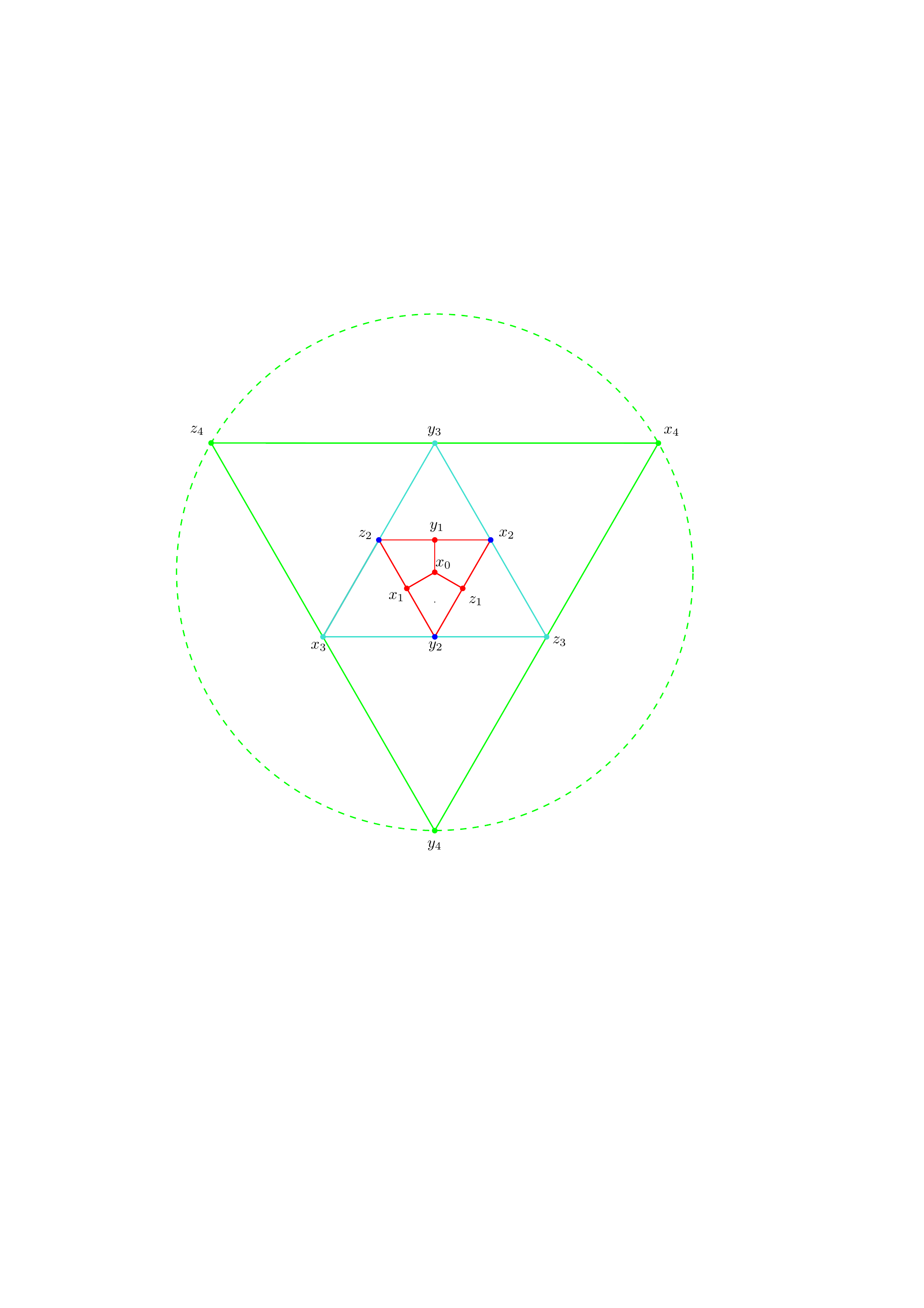}
	\label{fig:ao-Strelzoff-Sun-graphs}
\end{wrapfigure}
\begin{itemize}
\item The plane graph $\mathfrak{H}_1$ is composed of a cycle $(x_2,z_1,y_2,x_1,z_2,y_1)$ and of a vertex $x_0$ embedded inside such a cycle and adjacent to $x_1$, $y_1$, and $z_1$; see the \mbox{left part of the figure.}
\item For $i\geq 2$, the plane graph $\mathfrak{H}_i$ is obtained by embedding in the outer face of $\mathfrak{H}_{i-1}$ the vertices $x_{i+1}$, $y_{i+1}$, and $z_{i+1}$, and the edges of the cycle $(x_{i+1},z_{i},y_{i+1},x_{i},z_{i+1},y_{i})$, which bounds the outer face of $\mathfrak{H}_i$; see the right part of the figure.
\end{itemize}

In contrast to the result in~\cite{css-osggrep-09}, we prove the following.\footnote{The flaw in the proof presented in~\cite{css-osggrep-09} seems to be in the statement ``it is not difficult to see that the most economic way (i.e., consuming the minimum area) of stretching [$\mathfrak{H}_i$] into a greedy embedding is to do it symmetrically [$\dots$]''.}

\begin{theorem}\label{th:counter-couter-example}
Every $n$-vertex plane graph in $\mathcal H$ admits a planar angle-monotone drawing on an $O(n) \times O(n)$ grid that respects the plane embedding.
\end{theorem}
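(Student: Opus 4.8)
\noindent\emph{Proof strategy.}
The plan is to first make the structure of the graphs in $\mathcal H$ transparent. Writing $L_0=\{x_0\}$ and $L_j=\{x_j,y_j,z_j\}$ for $1\le j\le i+1$, one checks directly from the recursive definition that every edge of $\mathfrak H_i$ joins some $L_j$ to $L_{j+1}$; that $L_0\cup L_1$ induces the star $K_{1,3}$ centred at $x_0$; that for $j\ge 1$ the set $L_j\cup L_{j+1}$ induces exactly the $6$-cycle $H_j=(x_{j+1},z_j,y_{j+1},x_j,z_{j+1},y_j)$; that in the prescribed embedding the cycles $H_1,\dots,H_i$ are nested, with $H_j$ and $H_{j+1}$ sharing the three alternate vertices of $L_{j+1}$ and with $H_i$ bounding the outer face; and consequently that every internal face of $\mathfrak H_i$ is a quadrilateral and that $\mathfrak H_i$ carries a fixed-point-free $3$-fold combinatorial symmetry cyclically permuting $x\to y\to z$. (Equivalently, up to suppression of the degree-$2$ vertices, $\mathfrak H_i$ is a graph of nested triangles built around $K_4$.) The key point is that $\mathfrak H_i$ is assembled from $O(n)$ ``layers'', each contributing only three new vertices.

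I would then build the drawing $\Gamma_i$ of $\mathfrak H_i$ inductively, from $x_0$ outwards, committing at the outset to a constant-size set of admissible edge slopes --- say the multiples of $\frac{\pi}{4}$, which makes $\Gamma_i$ automatically an integer grid drawing. Passing from a drawing of $\mathfrak H_{j-1}$ to one of $\mathfrak H_j$ amounts to placing the three new vertices of $L_{j+1}$ just outside the current outer hexagon $H_{j-1}$ and rerouting so that $H_j$ becomes the outer boundary. The crucial point --- and precisely where the argument of Cao et al.\ goes wrong --- is that this extension must be performed \emph{asymmetrically}: scaling each hexagon by a constant factor forces $2^{\Omega(n)}$ area, but by ``rotating'' which side of the hexagon absorbs the growth one can let the drawing expand by only an additive constant per layer, so that after the $O(n)$ layers it still lies on an $O(n)\times O(n)$ grid. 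The placement is chosen so that all faces, including the outer hexagon and the three faces around $x_0$, stay convex; this is compatible with the previous requirement precisely because the nested hexagons are allowed to become progressively elongated rather than remaining similar.

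It remains to verify that $\Gamma_i$ is angle-monotone (convexity being built in). It suffices to exhibit, for every ordered pair $(u,v)$ of vertices, a $u$--$v$ path that is monotone in both coordinate directions; such a path lies in a closed $90\degree$ wedge and is therefore $\beta$-monotone for the appropriate $\beta$. The layered structure keeps this manageable: any $u$--$v$ path can be chosen so that the sequence of level indices along it is unimodal (monotone, with at most one reversal), the coordinates of $\Gamma_i$ are assigned so that such paths are coordinate-monotone, and by the $3$-fold symmetry only a bounded number of representative pairs need to be analysed --- a vertex of $L_0$ against a vertex of $L_{i+1}$, two outer vertices (which lie at distance $2$ on $H_i$), and, inductively, pairs at nearby levels. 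The hard part throughout is the asymmetric layering: one must pick integer coordinates for each new triple $L_{j+1}$ that simultaneously keep $\Gamma_i$ planar and faithful to the nested embedding, add only $O(1)$ to its width and height, keep every face convex, and --- the genuine difficulty --- leave a coordinate-monotone path available between every pair of vertices, in particular between $x_0$ and each of the three ``corner'' vertices $x_{i+1},y_{i+1},z_{i+1}$ and between each pair of these. Reconciling the linear-area constraint with this last requirement is exactly what the flawed symmetric construction could not do.
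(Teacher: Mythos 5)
Your plan has the right silhouette---an inductive, layer-by-layer construction in which each of the $O(n)$ layers adds only an additive constant to the width and height, with edge slopes drawn from a constant set---but it stops exactly where the proof has to begin. You never specify where the three new vertices of a layer are placed, and you explicitly defer, as ``the genuine difficulty,'' the task of reconciling linear area with the existence of a monotone path between every pair of vertices. That reconciliation is the entire content of the theorem, so as it stands there is a genuine gap: what is missing is a concrete inductive invariant that delivers both properties at once. The invariant used in the paper is that \emph{all} vertices of $\mathfrak H_i$ lie on three paths $p_i(\alpha)$, $\alpha\in\{\frac{\pi}{2},\frac{5\pi}{4},\frac{7\pi}{4}\}$, emanating from $x_0$, vertex-disjoint except at $x_0$, together spanning $V_i$, and drawn with every edge of $p_i(\alpha)$ at slope exactly $\alpha$; that is, the nested hexagons are drawn \emph{degenerately}, collapsed onto three rays out of $x_0$. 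The inductive step then merely appends one unit-length segment to the tip of each ray (e.g.\ $x_{i+1}$ goes to $(\xx(z_i),\yy(z_i)+1)$), so the grid grows by two rows and two columns per layer, and angle-monotonicity reduces to a short case check: two vertices on the same ray are joined by a subpath of that ray, while two vertices on different rays are joined through $x_0$ by a path whose edges use only the two slopes $\pi+\alpha$ and $\beta$, which for every pair from $\{\frac{\pi}{2},\frac{5\pi}{4},\frac{7\pi}{4}\}$ fit inside a closed $90\degree$ wedge.

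Two of your specific suggestions also point away from a workable construction. First, the area saving does not come from growing the hexagons ``asymmetrically'' while rotating which side expands: the correct drawing grows all three directions uniformly by one unit per layer, and linearity comes from the degeneracy just described---which is precisely the possibility that Cao et al.'s ``symmetric stretching'' argument overlooks (they implicitly assume consecutive hexagons are non-degenerate scaled copies). Second, certifying angle-monotonicity via paths that are monotone in \emph{both} coordinates is strictly stronger than what is needed, since it restricts the wedge to the four axis-aligned quadrants; with three branches meeting at $x_0$, the three inter-branch paths through $x_0$ require three different $90\degree$ wedges, and in the working construction one of them (joining the $\frac{5\pi}{4}$-ray to the $\frac{7\pi}{4}$-ray) is bisected by the horizontal direction rather than by a quadrant bisector. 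You therefore must allow arbitrary $\beta$, which again forces you to pin down the slopes of all edges explicitly---i.e., to state the very invariant your sketch leaves open.
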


\begin{proof}
In order to prove the statement, we construct, for every $i \geq 1$, a planar straight-line drawing $\Gamma_i$ of $\mathfrak{H}_i = (V_i,E_i)$ satisfying the following properties:
\begin{enumerate}[\bf (i)]
\item \label{prop:grid} the vertices of $\mathfrak{H}_i$ lie on an $(2i+3) \times (2i+3)$ grid;
\item \label{prop:threepaths} there exist paths $p_i(\alpha)$, with $\alpha \in \{\frac{\pi}{2},\frac{5\pi}{4},\frac{7\pi}{4}\}$, originating at $x_0$ and each terminating at a distinct vertex in $\{x_{i+1},y_{i+1},z_{i+1}\}$, that are vertex-disjoint except at $x_0$, that together span all the vertices in $V_i$, and such that all the edges in $p_i(\alpha)$ have slope $\alpha$.
\end{enumerate}

\cref{prop:threepaths} implies that $\Gamma_i$ is angle-monotone. Namely, consider any two vertices $u$ and $v$ of $\mathfrak{H}_i$.
If both $u$ and $v$ belong to the same path $p_i(\alpha)$, then the subpath of $p_i(\alpha)$ from $u$ to $v$ is either $\alpha$-monotone or $(\pi+\alpha)$-monotone.
If $u \in p_i(\alpha)$ and $v \in p_i(\beta)$, with $\alpha \neq \beta$, then the path $p^*$ consisting of the subpath of $p_i(\alpha)$ from $u$ to $x_0$ and of the subpath of $p_i(\beta)$ from $x_0$ to $v$ is $\frac{\pi}{2}$-monotone (if $\beta=\frac{\pi}{2}$), or $\frac{3\pi}{2}$-monotone (if $\alpha=\frac{\pi}{2}$), or $\pi$-monotone (if $\alpha=\frac{7\pi}{4}$ and $\beta=\frac{5\pi}{4}$), or $0$-monotone (if $\alpha=\frac{5\pi}{4}$ and $\beta=\frac{7\pi}{4}$).

%
Our proof is by induction on $i$.

\textbf{Base case}. If $i=1$, we construct a drawing $\Gamma_1$ of $\mathfrak{H}_1$ as follows; refer to \cref{fi:cao-draw-base}. 
We place the vertex $x_0$ at the point $(0,0)$, the vertices $x_1$, $y_1$, and $z_1$ at the points $(0,1)$, $(1,-1)$, and $(-1,-1)$, respectively, and the vertices $x_2$, $y_2$, and~$z_2$ at the points $(-2,-2)$, $(0,2)$, and $(2,-2)$, respectively, and draw the edges of~$\mathfrak{H}_1$ as straight-line segments. 
By construction, $\Gamma_1$ is a planar straight-line grid drawing of $\mathfrak{H}_1$ on the 
$5 \times 5$ grid, thus satisfying \cref{prop:grid}.
Further, the paths 
$p_1(\frac{\pi}{2}) = (x_0, x_1, y_2)$,  
$p_1(\frac{5\pi}{4}) = (x_0, z_1, x_2)$, and
$p_1(\frac{7\pi}{4}) = (x_0, y_1, z_2)$ show that \cref{prop:threepaths} is satisfied by $\Gamma_1$.

\begin{figure}[tb]
	\subfloat[]{
		\includegraphics[page=16,scale=0.6]{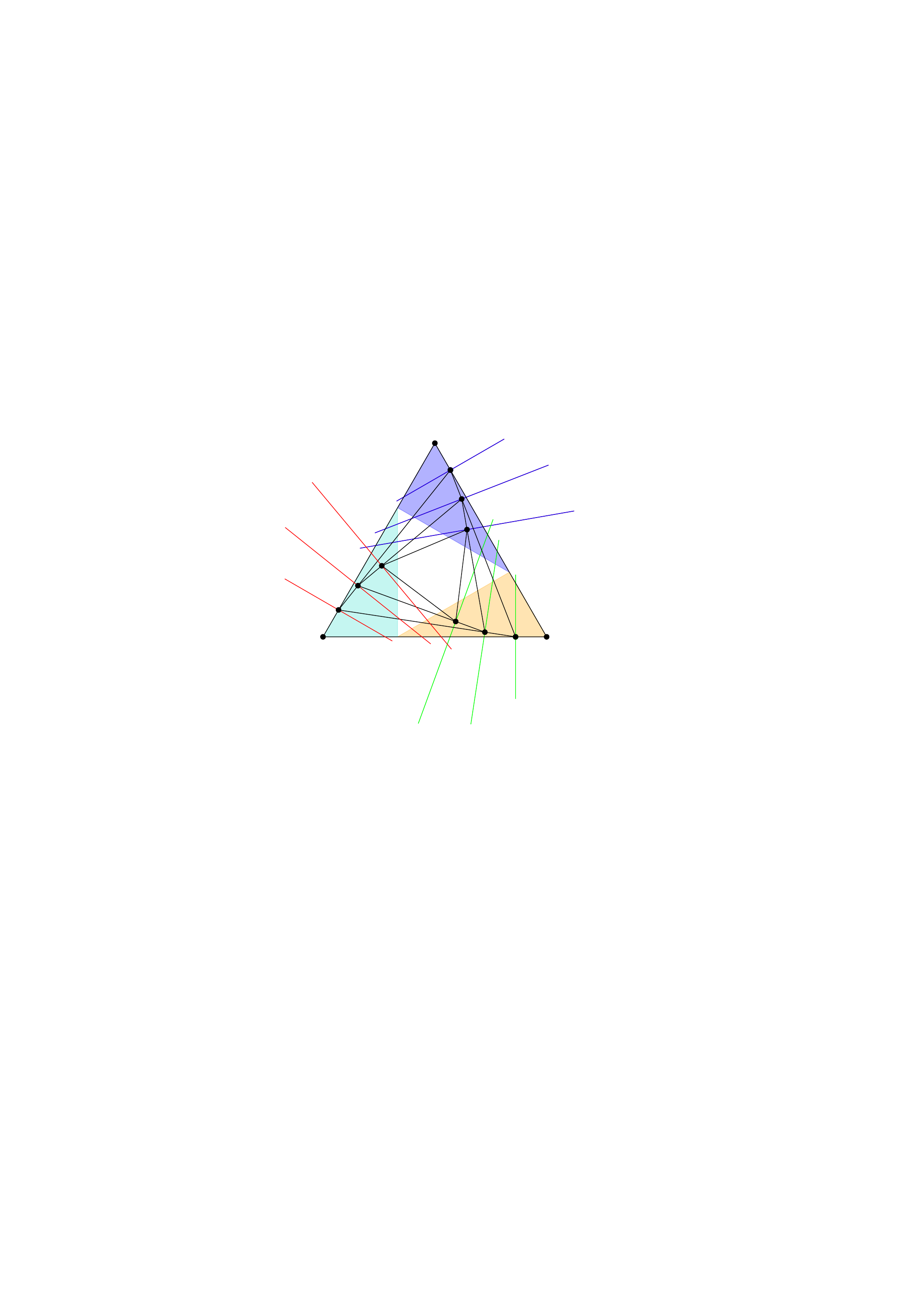}
		\label{fi:cao-draw-base}
	}
	\subfloat[]{
		\includegraphics[page=15,scale=0.6]{cao.pdf}
		\label{fi:cao-draw-quadrilaterals}
	}
	\subfloat[]{
		\includegraphics[page=17,scale=0.6]{cao.pdf}
		\label{fi:convex-cao-quasi}
	}
	\subfloat[]{
		\includegraphics[page=18,scale=0.6]{cao.pdf}
		\label{fi:convex-cao}
	}
	\caption{
		Illustrations for the proof of \cref{th:counter-couter-example}:
		(a) The drawing $\Gamma_1$ of $\mathfrak{H}_1$;
		(b) The drawing $\Gamma_i$ of $\mathfrak{H}_i$ obtained from the drawing $\Gamma_{i-1}$ of $\mathfrak{H}_{i-1}$, with $i>1$.
		(d) The convex angle-monotone drawing $\Gamma'_3$ of $\mathfrak{H}_3$ obtained from (c) $\Gamma_3$.}
	\label{fi:cao}
\end{figure}

\textbf{Inductive case}. If $i>1$, suppose to have inductively constructed a drawing $\Gamma_{i-1}$ of $\mathfrak{H}_{i-1}$ satisfying \cref{prop:grid,prop:threepaths}. Assume, as in \cref{fi:cao-draw-quadrilaterals}, that $z_{i}$ is in~$p_{i-1}(\frac{\pi}{2})$, that $y_{i}$ is in $p_{i-1}(\frac{5\pi}{4})$, and that $x_{i}$ is in $p_{i-1}(\frac{7\pi}{4})$; the other cases can be treated analogously. 

We obtain $\Gamma_i$ from $\Gamma_{i-1}$ by placing $x_{i+1}$ at the point $(\xx(z_i),\yy(z_i)+1)$, $y_{i+1}$ at the point $(\xx(x_i)+1,\yy(x_i)-1)$, and $z_{i+1}$ at the point $(\xx(y_i)-1,\yy(y_i)-1)$, and by drawing the edges incident to these vertices as straight-line segments. We have the following.

\begin{claim} \label{cl:cao}
	$\Gamma_i$ satisfies \cref{prop:grid,prop:threepaths}.
\end{claim}

\begin{proof}
	First observe that, since $\Gamma_{i-1}$ is a grid drawing, by induction, we have that $x_{i+1}$, $y_{i+1}$, and $z_{i+1}$ have integer coordinates, hence $\Gamma_i$ is a grid drawing as well. By \cref{prop:threepaths} of $\Gamma_{i-1}$, all the vertices of $\mathfrak{H}_{i-1}$ lie on the straight-line segments connecting $x_0$ with $x_i$, $y_i$, and $z_i$. Hence, $\Gamma_{i-1}$ lies inside the triangle $\Delta_{i-1}$ with vertices $x_i$, $y_i$, and $z_i$. On the other hand, the edges incident to  $x_{i+1}$, $y_{i+1}$, and~$z_{i+1}$ lie in the exterior of $\Delta_{i-1}$ in $\Gamma_i$ (except, possibly, for their endpoints); since these edges do not cross each other, we have that $\Gamma_i$ is planar.
	
	We prove that $\Gamma_i$ satisfies \cref{prop:grid}. By construction, $\Gamma_i$ intersects two more grid rows and two more grid columns than $\Gamma_{i-1}$, hence it lies on the $(2i+3) \times (2i+3)$ grid, since $\Gamma_{i-1}$ lies on the $(2i+1) \times (2i+1)$ grid, by induction. 
	
	We prove that $\Gamma_i$ satisfies \cref{prop:threepaths}. Define the paths $p_i(\frac{\pi}{2})=p_{i-1}(\frac{\pi}{2})\cup (z_i,x_{i+1})$,  $p_i(\frac{5\pi}{4})=p_{i-1}(\frac{5\pi}{4})\cup (y_i,z_{i+1})$, and
	$p_i(\frac{7\pi}{4})=p_{i-1}(\frac{7\pi}{4})\cup (x_i,y_{i+1})$. We have that $p_i(\frac{\pi}{2})$, $p_i(\frac{5\pi}{4})$, and $p_i(\frac{7\pi}{4})$ originate at $x_0$ and are vertex-disjoint except at $x_0$, since $p_{i-1}(\frac{\pi}{2})$, $p_{i-1}(\frac{5\pi}{4})$, and $p_{i-1}(\frac{7\pi}{4})$  satisfy the same properties, by induction. By construction, each of $p_i(\frac{\pi}{2})$,  $p_i(\frac{5\pi}{4})$, and $p_i(\frac{7\pi}{4})$ terminates at a distinct vertex in $\{x_{i+1},y_{i+1},z_{i+1}\}$. The paths $p_i(\frac{\pi}{2})$, $p_i(\frac{5\pi}{4})$, and $p_i(\frac{7\pi}{4})$ together span all the vertices in $V_i$ since $p_{i-1}(\frac{\pi}{2})$, $p_{i-1}(\frac{5\pi}{4})$, and $p_{i-1}(\frac{7\pi}{4})$ together span all the vertices in~$V_{i-1}$ and $V_i=V_{i-1}\cup \{x_{i+1},y_{i+1},z_{i+1}\}$. Finally, the edges of $p_i(\frac{\pi}{2})$, $p_i(\frac{5\pi}{4})$, and~$p_i(\frac{7\pi}{4})$ have slope $\frac{\pi}{2}$, $\frac{5\pi}{4}$, and $\frac{7\pi}{4}$ in $\Gamma_i$, respectively, since by induction the edges of $p_{i-1}(\frac{\pi}{2})$, $p_{i-1}(\frac{5\pi}{4})$, and $p_{i-1}(\frac{7\pi}{4})$ have slope $\frac{\pi}{2}$, $\frac{5\pi}{4}$, and $\frac{7\pi}{4}$ in $\Gamma_{i-1}$, respectively, and since, by construction, the edges $(z_i,x_{i+1})$, $(y_i,z_{i+1})$, and $(x_i,y_{i+1})$ have slope $\frac{\pi}{2}$, $\frac{5\pi}{4}$, and $\frac{7\pi}{4}$ in $\Gamma_i$, respectively.
\end{proof}

\cref{cl:cao} concludes the induction and the proof of the theorem.\end{proof}

We note that, for $i\geq 1$, the graph $\mathfrak{H}_i$ even admits a {\em convex} angle-monotone drawing $\Gamma'_i$ on an $(2i+3) \times (2i+3)$ grid; indeed, $\Gamma'_i$ can be obtained from the planar angle-monotone drawing $\Gamma_i$ of $\mathfrak{H}_i$ described in the proof of \cref{th:counter-couter-example} by moving $x_i$ one unit to the right and one unit down, $y_{i+1}$ and $z_{i+1}$ one unit to the right, and $x_{i+1}$ one unit to the left; see \cref{fi:convex-cao-quasi,fi:convex-cao}. We have the following.

\begin{claim}
$\Gamma'_i$ is a convex angle-monotone\mbox{ drawing of $\mathfrak{H}_i$ on an $(2i+3) \times (2i+3)$ grid.}
\end{claim}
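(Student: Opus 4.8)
The plan is to verify the claim by checking that the modification described in the text preserves all the relevant properties. Recall that $\Gamma'_i$ is obtained from $\Gamma_i$ by moving $x_i$ one unit right and one unit down, moving $y_{i+1}$ and $z_{i+1}$ one unit right, and moving $x_{i+1}$ one unit left. First I would check that $\Gamma'_i$ is still a grid drawing lying on an $(2i+3)\times(2i+3)$ grid: the moves of $x_i$, $y_{i+1}$, $z_{i+1}$, and $x_{i+1}$ are all by integer amounts and stay within the bounding box of $\Gamma_i$ (indeed, $x_i$ was the rightmost vertex of $\Gamma_{i-1}$ on its path and moving it right brings it towards the column of $y_{i+1}$, which already sits one unit further right; similarly for the others), so no new grid line is added.

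Next I would argue that $\Gamma'_i$ is planar and, in fact, convex. The point is that in $\Gamma_i$ the only faces that are not already delimited by convex polygons are the quadrilateral faces of the ``outermost ring'', i.e., the four-cycles of the form $(x_{j+1},z_j,y_{j+1},\dots)$ created at each inductive step, together with the outer face; the reflex angles occur at the vertices $x_j$, $y_{j+1}$, $z_{j+1}$, $x_{j+1}$ where two consecutive edges of the spanning paths meet at a straight angle or at a reflex angle as seen from inside a quadrilateral. Moving the four named vertices of the last ring as prescribed turns the two critical quadrilateral faces of the last layer and the outer face into convex polygons — this is exactly what \cref{fi:convex-cao-quasi,fi:convex-cao} illustrate — while leaving all earlier faces untouched, since no vertex of $\mathfrak{H}_{i-1}$ other than $x_i$ is moved, and $x_i$ together with $y_i,z_i$ still bounds the triangle $\Delta_{i-1}$ containing $\Gamma_{i-1}$ (the move of $x_i$ only enlarges the region outside $\Delta_{i-1}$). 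I would also recheck that the edges incident to the moved vertices do not create crossings, using that these edges still lie outside $\Delta_{i-1}$ and that the cyclic order of edges around each moved vertex is preserved.

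Then I would verify angle-monotonicity. The cleanest way is to exhibit, exactly as in the proof of \cref{th:counter-couter-example}, three paths $p'_i(\alpha)$ with $\alpha\in\{\frac{\pi}{2},\frac{5\pi}{4},\frac{7\pi}{4}\}$ from $x_0$ that are vertex-disjoint except at $x_0$, span $V_i$, and consist of edges of a single prescribed slope per path. The subtlety is that after the perturbation the last edge of each path no longer has slope exactly $\frac{\pi}{2}$, $\frac{5\pi}{4}$, or $\frac{7\pi}{4}$; e.g. $(z_i,x_{i+1})$ now goes from $(\xx(z_i),\yy(z_i)+1)$'s new configuration to a point one unit left, so its slope changes. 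Hence I would instead argue directly that each of the two paths $p'_i(\cdot)$ used to connect an arbitrary pair $u,v$ is $\beta$-monotone for the appropriate $\beta\in\{0,\frac{\pi}{2},\pi,\frac{3\pi}{2}\}$: every edge of the ``interior'' portion (inside $\mathfrak{H}_{i-1}$) still has the original slope, and the at most three perturbed edges $(x_i,y_{i+1})$, $(z_i,x_{i+1})$, $(y_i,z_{i+1})$ each have a slope obtained from the original one by a shift of small magnitude, which I would check stays within the closed $90\degree$-wedge about the relevant $\beta$. Concretely one tabulates, for each of the (constantly many) combinations of which path $u$ lies on and which path $v$ lies on, the set of slopes appearing on the connecting walk and confirms they all lie in a common $90\degree$-wedge.

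The main obstacle I expect is precisely this last verification: unlike in \cref{th:counter-couter-example}, where every edge on a path had one of exactly three slopes, here a handful of edges near the outer boundary have perturbed slopes, so the slick ``all edges of $p_i(\alpha)$ have slope $\alpha$'' argument no longer applies verbatim and one must instead bound the perturbed slopes and check wedge-containment by hand (a finite case check, but one that must be done carefully so that the $90\degree$-wedge condition is met in every $u$-$v$ case, including the cases where $u$ or $v$ is itself one of the moved vertices). Everything else — the grid size bound and planarity/convexity — follows by a routine comparison with $\Gamma_i$ and an appeal to \cref{fi:convex-cao-quasi,fi:convex-cao}.
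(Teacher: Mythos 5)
Your overall strategy tracks the paper's up to the angle-monotonicity step, but there the plan as written breaks down on one family of vertex pairs. You propose to connect every pair $u,v$ by the concatenation through $x_0$ of the two spanning paths containing them (as in \cref{th:counter-couter-example}) and then merely to verify that the few perturbed edge slopes still fit in the relevant $90\degree$-wedge. For $u=z_{i+1}$ and $v$ on $p_i(\frac{7\pi}{4})$ this verification fails, and no choice of $\beta$ rescues it: after the perturbation the edge $(y_i,z_{i+1})$ is vertical, so the walk from $z_{i+1}$ through $x_0$ to $v$ uses edges of slopes $\frac{\pi}{2}$, $\frac{\pi}{4}$, $\frac{7\pi}{4}$, and possibly $0$, which span $135\degree$ and hence cannot lie in any closed $90\degree$-wedge. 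The paper's proof is organized precisely around this obstruction: it forms the three long paths $P_1=p_i(\frac{\pi}{2})\cup p_i(\frac{5\pi}{4})$, $P_2=p_i(\frac{\pi}{2})\cup p_i(\frac{7\pi}{4})$, and $P_3=p_{i-1}(\frac{5\pi}{4})\cup p_i(\frac{7\pi}{4})$ (note that $P_3$ deliberately omits $z_{i+1}$), observes that every pair except $(z_{i+1},\,v\in p_i(\frac{7\pi}{4}))$ lies on one of them, and for the exceptional pair abandons the route through $x_0$ altogether, using instead the chord $(z_{i+1},x_i)$ of the outer hexagon (which is horizontal after the perturbation) or a one-edge detour off $p_i(\frac{5\pi}{4})$ into $v$. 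That alternative routing is the missing idea; it is not a careful finite check of the $x_0$-routes but a replacement of them for these pairs.

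Two smaller inaccuracies. First, the monotonicity directions are not confined to $\{0,\frac{\pi}{2},\pi,\frac{3\pi}{2}\}$: the paper also needs $\frac{3\pi}{4}$, $\frac{7\pi}{4}$, and $\frac{\pi}{4}$ (e.g., $P_2$ is only $\frac{3\pi}{4}$- or $\frac{7\pi}{4}$-monotone, since its slopes fill the entire interval $[\frac{3\pi}{2},2\pi]$). Second, the moved vertices do not all stay inside the old bounding box: $y_{i+1}$ moves into a new rightmost column, and the grid bound survives only because the column previously occupied by $z_{i+1}$ alone is simultaneously vacated. The convexity part of your argument is at the same level of detail as the paper's, which likewise only asserts it.
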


\begin{proof}
It is easy to see that $\Gamma'_i$ is a convex drawing of $\mathfrak{H}_i$ on an $(2i+3) \times (2i+3)$~grid. We prove that $\Gamma'_i$ is angle-monotone. Consider the following three paths: 
\begin{itemize} 
	\item $P_1=p_{i}(\frac{\pi}{2})\cup p_{i}(\frac{5\pi}{4})$; 
	\item $P_2=p_{i}(\frac{\pi}{2})\cup p_{i}(\frac{7\pi}{4})$; and
	\item $P_3=p_{i-1}(\frac{5\pi}{4})\cup p_{i}(\frac{7\pi}{4})$.
\end{itemize} 
If $u$ and $v$ both belong to the path $P_1$, $P_2$, or $P_3$, then the subpath of such a path from~$u$ to $v$ is  
\begin{itemize} 
	\item $\frac{\pi}{2}$-monotone or $\frac{3\pi}{2}$-monotone, 
	\item $\frac{3\pi}{4}$-monotone or $\frac{7\pi}{4}$-monotone, or
	\item $0$-monotone or $\pi$-monotone, respectively.
\end{itemize}
Note that $u$ and~$v$ both belong to one of $P_1$, $P_2$, or $P_3$, unless one of them, say $u$, is $z_{i+1}$ and the other one, say $v$, belongs to $p_{i}(\frac{7\pi}{4})$. In such a case, a $\beta$-monotone path $P$ from $u$ to $v$ can be defined as follows. If $v=x_i$, then $P$ coincides with the edge $(z_{i+1},x_i)$; if $v=y_{i+1}$, then $P$ coincides with the path $(z_{i+1},x_i,y_{i+1})$; in both cases, $P$ is $0$-monotone. Finally, if $v$ belongs to $p_{i-2}(\frac{7\pi}{4})$, then $P$ is defined as the subpath of $p_{i}(\frac{5\pi}{4})$ from $u$ to the only neighbor of $v$ in $p_{i}(\frac{5\pi}{4})$, and from that neighbor to~$v$; then $P$ is $\frac{\pi}{4}$-monotone. 
\end{proof}

He and Zhang~\cite{hz-osgdptt-14} pointed out that, although the graphs $\mathfrak{H}_i$'s are not \mbox{$3$-connected}, they can be made so by adding the three additional edges $(x_{i+1},y_{i+1})$, $(y_{i+1},z_{i+1})$, and $(z_{i+1},x_{i+1})$. Let $\mathfrak{H}^+_i$ be the resulting graph. We note here that the drawing $\Gamma_i$ of $\mathfrak{H}_i$ whose construction is described in the proof of \cref{th:counter-couter-example} can be turned into a convex angle-monotone drawing $\Gamma^+_i$ of $\mathfrak{H}^+_i$ simply by drawing the edges $(x_{i+1},y_{i+1})$, $(y_{i+1},z_{i+1})$, and $(z_{i+1},x_{i+1})$ \mbox{as straight-line segments.}

\section{Angle-Monotone Drawings of Halin Graphs}\label{sse:halin}

In this section, we show how to construct convex angle-monotone drawings of Halin graphs on a polynomial-size grid. 


We denote the number of leaves of a tree $T$ by $\ell(T)$. A tree whose all vertices but one are leaves is a \emph{star}. A \emph{rooted tree} $T$ is a tree with one distinguished vertex, called \emph{root} and denoted by $r(T)$. The \emph{height} of a rooted tree is the maximum number of edges in any path from the root to a leaf. In a rooted tree $T$, we denote by $T(v)$ the subtree of $T$ rooted at a vertex $v$. An \emph{ordered rooted tree} is a rooted tree in which the children of each internal vertex $u$ are assigned a left-to-right order $u_1,\dots,u_k$; the vertices $u_1$ and $u_k$ are the \emph{leftmost} and the \emph{rightmost child} of $u$, respectively.
The \emph{leftmost path}  of an ordered rooted tree $T$ is the path $(v_1,\dots,v_h)$ in $T$ such that $v_1$ is the root of $T$, $v_{i+1}$ is the leftmost child of $v_i$, for $i=1,\dots,h-1$, and $v_h$ is a leaf, which is called the \emph{leftmost leaf} of $T$. The \emph{rightmost path} and the \emph{rightmost leaf} of $T$ can be defined analogously.

A \emph{Halin graph} $G$ is a $3$-connected planar graph that admits a plane embedding $\mathcal E$ such that, by removing all the edges incident to the outer face $f_{\mathcal E}$ of $\mathcal E$, one gets a tree $T_G$ whose internal vertices have degree at least $3$ and whose leaves are incident to $f_{\mathcal E}$. We have the following main result.


\begin{theorem}\label{th:halin}
	Every $n$-vertex Halin graph $G$ admits a convex angle-monotone drawing on an $O(n) \times O(n)$ grid.
\end{theorem}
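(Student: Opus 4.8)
Write $T=T_G$ for the underlying tree of $G$ and $C$ for the cycle through the leaves of $T$ in the cyclic order given by the distinguished embedding, so $G=T\cup C$ and the outer face of $G$ is bounded by $C$. The first step is to record the recursive structure of Halin graphs: if $T$ is a star then $G$ is a wheel; otherwise $T$ has an internal vertex $v$ with a unique internal neighbour whose remaining (leaf) neighbours $\ell_p,\dots,\ell_q$ are consecutive along $C$, and \emph{contracting this fan} --- deleting $\ell_p,\dots,\ell_q$ and making $v$ a leaf of $C$ between $\ell_{p-1}$ and $\ell_{q+1}$ --- produces a strictly smaller Halin graph. Iterating the reduction singles out a root $r$ of $T$ (the hub of the wheel left at the end); equivalently, each subtree $T(w)$ of the rooted tree corresponds to a contiguous arc of the leaves of $C$, the arcs of the children of a vertex partitioning the arc of that vertex.

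\paragraph{The construction.} I would draw $C$ as an ``almost regular'' convex polygon of circumradius $\Theta(n)$, its vertices at nearly uniform angular spacing rounded to the integer grid, placing each leaf $\ell$ of $T$ inside the arc reserved for it, where the arc of a subtree $T(w)$ receives angular width proportional to its number of leaves $\ell(T(w))$. The internal vertices of $T$ are then placed inside the polygon, each along a heavy path (``spine'') of its subtree at a radius decreasing with the depth of the spine, so that (i) within any subtree all edges point close to the radial direction of that subtree's arc --- in particular their directions deviate from a common value by a controlled amount --- and (ii) the drawing is planar. All coordinates are to be read off the fan decomposition in a single pass, never by literally iterating the expansions (which would cause an exponential blow-up), so that the drawing fits on an $O(n)\times O(n)$ grid. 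This is the natural extension, to an arbitrary branching tree, of the three collinear ``legs'' used for the Cao--Strelzoff--Sun graphs in \cref{th:counter-couter-example} and of the hub-and-rim drawing of a wheel.

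\paragraph{Verification.} Three properties then need to be established. Planarity and the $O(n)\times O(n)$ bound should follow routinely from the layout. For \emph{convexity}, one checks that each internal face --- bounded by the right spine of a subtree $T(c)$, one edge of $C$, the left spine of the next subtree $T(c')$, and the two edges joining $c$ and $c'$ to their parent --- is a convex polygon; this is precisely what dictates the choice of radii. For \emph{angle-monotonicity}, I would argue by cases on a pair $u,v$: (a) if $u$ is an ancestor of $v$ in $T$ (or conversely), the tree path between them is monotone because its edges stay within the narrow cone of directions used inside the relevant subtree; (b) if $u,v$ are leaves at most a quarter of the way apart around $C$, a short arc of $C$ works, since it turns by at most $\frac{\pi}{2}$; (c) if $u,v$ are leaves farther apart, the path from $u$ up to $\operatorname{lca}(u,v)$ and down to $v$ works, because its two halves run along nearly straight spines pointing into essentially opposite arcs, so that all traversed directions fit into one $\frac{\pi}{2}$-wedge; (d) the remaining cases ($u$ or $v$ internal, or two leaves at roughly a quarter turn apart) are covered by concatenating a short arc of $C$ with one such tree path, exactly as in the boundary analysis one already has to carry out for wheels.

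\paragraph{Main obstacle.} The hard part, I expect, is to reconcile the convexity and angle-monotonicity requirements for a genuinely arbitrary tree: convexity of the internal faces pulls the subtree spines toward being straight and ``fat'', whereas angle-monotonicity of the $\operatorname{lca}$-paths pulls each subtree toward being confined to a narrow angular sector aligned with its arc of $C$ --- and, unlike the three-legged spiders of the Cao--Strelzoff--Sun family or the depth-one stars underlying wheels, a Halin tree can be simultaneously deep and of high degree, so neither constraint is satisfied for free. Determining radii (and the precise constants of the ``almost regular'' polygon) that meet both constraints on a polynomial-size integer grid, and verifying that the regimes (a)--(d) of the case analysis overlap at their boundaries, is where the bulk of the technical effort will go.
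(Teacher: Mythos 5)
Your combinatorial skeleton is right (the wheel base case, the star/fan reduction, assigning each subtree a contiguous run of leaves of length proportional to $\ell(T(w))$ -- this matches the width bookkeeping in \cref{le:halin-tree}), but the geometry you build on it has two concrete problems. First, the ``almost regular convex polygon of circumradius $\Theta(n)$ rounded to the integer grid'' does not exist: a convex lattice polygon contained in an $N\times N$ grid has only $O(N^{2/3})$ vertices in strictly convex position, while every Halin graph has more than $n/2$ leaves on $C$. So on an $O(n)\times O(n)$ grid the leaf cycle is forced to contain long collinear runs, and once you flatten the polygon the radial/angular language driving your whole argument (arcs, radial spines, tangential cycle edges) no longer describes the drawing. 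Second, even granting a round layout on a larger grid, your angle-monotonicity cases are tight with \emph{zero} slack exactly where they must hand off to each other: for two leaves separated by a quarter turn, the arc of $C$ already spans a full closed $\frac{\pi}{2}$-wedge of directions, and the alternative path through the lca spans the interval between the inward direction at $u$ and the outward direction at $v$, which is again exactly $\frac{\pi}{2}$. Any deviation of the spines from the exact radial direction -- which is unavoidable, both because of grid rounding and because the nested subtrees containing a vertex have different arc centers -- pushes one of these beyond $\frac{\pi}{2}$. Likewise, in case (d) you concatenate a cycle arc with a tree path, but a concatenation of two monotone paths is not monotone unless \emph{all} its edge directions lie in one closed quarter-wedge, which you have not arranged. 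You correctly flag this tension as the hard part, but it is not a matter of tuning constants: with a genuinely round outer cycle there is no slack left to absorb the errors.

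The paper resolves both issues by abandoning the circular layout altogether. It splits $T_G$ at an internal vertex $\xi$ all of whose neighbours but one are leaves, obtaining a star $S$ and a rooted tree $T$, and draws each so that all leaves lie on a single horizontal line at even $x$-coordinates and \emph{every} tree edge has slope in $[\frac{\pi}{4},\frac{3\pi}{4}]$ measured upward, with the leftmost and rightmost paths of every subtree at slopes exactly $\frac{5\pi}{4}$ and $\frac{7\pi}{4}$ (\cref{le:halin-tree}); the two drawings are then glued facing each other, so the outer cycle becomes a trapezoid with two long collinear runs. This is, in effect, the degenerate limit of your picture in which each ``arc'' is flattened to a line and all ``radial'' directions become parallel: vertex-to-root paths are $\frac{\pi}{2}$-monotone, leaf-to-leaf paths (down the rightmost path, along the leaf line, up a leftmost path) are $0$-monotone, and $S$-to-$T$ paths are $\frac{3\pi}{2}$-monotone, each with an explicit closed quarter-wedge and no boundary cases to reconcile. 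Convexity of the internal faces then follows from the same fixed slopes. If you want to salvage your approach, the lesson is that the outer cycle should be drawn with only $O(1)$ corners, at which point you are essentially reconstructing the paper's proof.
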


If $T_G$ contains one internal vertex, then $G$ is a wheel and a convex angle-monotone drawing on a $3 \times (n-1)$ grid can easily be computed; refer to \cref{fi:wheel}\blue{(top)}. In the following, we assume that $T_G$ contains at least two internal vertices. 

Let $\xi$ be an internal vertex of $T_G$ whose every neighbor is a leaf, except for one, which we denote by $\rho$; see \cref{fi:halin-a}. Such a vertex exists by the above assumption. Further, let $T \subset T_G$ be the tree obtained from $T_G$ by removing $\xi$ and all its adjacent leaves and by rooting the resulting tree at $\rho$. Also, let $S \subset T_G$ be the star obtained from $T_G$ by removing the vertices of $T$ and by rooting the resulting tree at $\xi$.
We regard $T$ and $S$ as ordered rooted trees such that the left-to-right order of the children of each vertex is the one induced by the plane embedding~$\mathcal E$ of $G$.
For any subtree $T'\subseteq T_G$, let $G[T']$ be the subgraph of $G$ induced by the vertices of $T'$.
In \cref{le:halin-tree}, we show how to construct a drawing $\Gamma$ of $G[T]$. Then, we will exploit \cref{le:halin-tree} in order to prove \cref{th:halin}.

\begin{lemma}\label{le:halin-tree}
The graph $G[T]$ has a drawing $\Gamma$ satisfying the following properties:
\begin{enumerate}[(i)]
	\item \label{prop:greedy} $\Gamma$ is angle-monotone and convex; 
	\item \label{prop:convex-grid} $\Gamma$ lies on a $W_\Gamma\times H_\Gamma$ grid, where $W_\Gamma = 2\ell(T)-1$ and $H_\Gamma = \ell(T)$;
	\item \label{prop:points} the leaves of $T$ lie at $(0,0),(2,0),\dots,(2\ell(T)-2,0)$, where the $i$-th leaf of $T$ lies at $(2i-2,0)$, for $i=1,\dots,\ell(T)$; and
	\item \label{prop:paths} for each vertex $v$ of $T$, the edges of the leftmost path (resp., of the rightmost path) of $T(v)$ have slope $\frac{5\pi}{4}$ (resp., slope $\frac{7\pi}{4}$). 
\end{enumerate}
\end{lemma}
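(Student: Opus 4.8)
The plan is to construct the drawing $\Gamma$ of $G[T]$ by induction on the structure of the ordered rooted tree $T$, processing it recursively from the root $\rho$ downward. The base case is when $T$ consists of a single internal vertex with only leaf-children (so $T(\rho)$ is a star); in this case $G[T]$ is a fan and we place the $\ell(T)$ leaves at $(0,0),(2,0),\dots,(2\ell(T)-2,0)$ as required by Property~(\ref{prop:points}), and place $\rho$ above them so that the leftmost edge has slope $\frac{5\pi}{4}$, the rightmost edge has slope $\frac{7\pi}{4}$, and the remaining tree edges and the path edges on the leaves form a convex fan; one checks the width is $2\ell(T)-1$ and the height is $\ell(T)$.

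For the inductive step, suppose $\rho$ has children $u_1,\dots,u_k$ in left-to-right order. I would recursively draw each $G[T(u_j)]$ using the lemma, obtaining a drawing on a $(2\ell(T(u_j))-1)\times \ell(T(u_j))$ grid with its leaves on the $x$-axis at the prescribed even-indexed points and with the leftmost/rightmost paths of each subtree having slopes $\frac{5\pi}{4}$ and $\frac{7\pi}{4}$. Since $\ell(T)=\sum_j \ell(T(u_j))$, I place these $k$ sub-drawings side by side along the $x$-axis, separated so that consecutive leaves are two units apart; this realizes Property~(\ref{prop:points}) and gives total width $2\ell(T)-1$. Then I add $\rho$: the edge $(\rho,u_1)$ must be placed along the leftmost path of $T$, i.e.\ it must have slope $\frac{5\pi}{4}$ (extending the slope-$\frac{5\pi}{4}$ leftmost path of $T(u_1)$ upward), and the edge $(\rho,u_k)$ must have slope $\frac{7\pi}{4}$; these two constraints pin down the position of $\rho$ (it sits at the apex where the line of slope $\frac{5\pi}{4}$ through $u_1$ meets the line of slope $\frac{7\pi}{4}$ through $u_k$), and I must verify this position has integer coordinates, that $\rho$ lies above all the sub-drawings, and that the remaining edges $(\rho,u_j)$ for $1<j<k$ can be drawn as straight segments without crossings and so that every internal face of $G[T]$ stays convex. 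The height bound $H_\Gamma=\ell(T)$ comes from the apex of $\rho$: since the outermost slopes are $\pm 45\degree$ from vertical and the base has width $2\ell(T)-2$, the apex is at height $\ell(T)-1$, so the drawing occupies rows $0,\dots,\ell(T)-1$, i.e.\ $\ell(T)$ rows.

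For Property~(\ref{prop:greedy}), convexity follows from the above construction since each internal face is bounded by the two slope-$\frac{5\pi}{4}$ and slope-$\frac{7\pi}{4}$ tree paths of a subtree together with a subpath along the $x$-axis and the apex vertex, all of which form a convex polygon, and analogous reasoning handles the faces incident to $\rho$. Angle-monotonicity follows from Property~(\ref{prop:paths}): for any two vertices $u,v$, if both lie on a common root-to-leaf path we use a monotone subpath of it; otherwise we route from $u$ up to the least common ancestor along rightmost/leftmost paths and back down, and the resulting path is $\beta$-monotone for an appropriate $\beta\in\{0,\frac{3\pi}{4},\pi,\frac{3\pi}{2},\frac{5\pi}{4},\frac{7\pi}{4}\}$ depending on which combination of leftmost ($\frac{5\pi}{4}$) and rightmost ($\frac{7\pi}{4}$) segments and horizontal ($0$ or $\pi$) segments along the $x$-axis are used; the $90\degree$-wedge condition holds because these slopes pairwise differ by at most $\frac{\pi}{2}$ along any such path. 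The main obstacle I anticipate is verifying that placing $\rho$ at the intersection of the two $\pm45\degree$ lines always yields integer coordinates and that the interior edges $(\rho,u_j)$ do not cause crossings or break convexity — this requires a careful parity argument relating the leaf-counts $\ell(T(u_j))$ to the horizontal positions of the leftmost leaves of each subtree, and it is here that the exact placement at even-indexed grid columns in Property~(\ref{prop:points}) is essential for the induction to close.
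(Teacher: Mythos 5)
Your construction is the same as the paper's: induction over the tree, subtree drawings placed side by side with all leaves on the $x$-axis at even abscissae two units apart, and the root placed at the apex $(\ell(T)-1,\ell(T)-1)$ where the lines of slope $\frac{\pi}{4}$ through the leftmost leaf and of slope $\frac{3\pi}{4}$ through the rightmost leaf meet; the integrality and convexity checks you flag all go through exactly as you anticipate (in particular the apex is automatically a lattice point because the base width $2\ell(T)-2$ is even, so no extra parity argument is needed).

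The one place where your argument, as written, would fail is the angle-monotone path between vertices $u$ and $v$ lying in distinct subtrees: you propose to route ``up to the least common ancestor and back down.'' The upward part of such a path has edge slopes anywhere in $[\frac{\pi}{4},\frac{3\pi}{4}]$ and the downward part anywhere in $[\frac{5\pi}{4},\frac{7\pi}{4}]$; these two ranges contain antipodal directions (e.g.\ $\frac{\pi}{4}$ and $\frac{5\pi}{4}$), so no $90\degree$ wedge contains all edges and the path is not $\beta$-monotone in general. The correct route — and the one the paper uses — goes the other way: down the \emph{rightmost path of $T(u)$} to its rightmost leaf, then across the leaf level using the outer-cycle edges of the Halin graph (these horizontal edges belong to $G[T]$ and are exactly why the statement is about $G[T]$ rather than $T$), then up the \emph{leftmost path of $T(v)$}. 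Traversed left to right this uses only slopes $\frac{7\pi}{4}$, $0$, and $\frac{\pi}{4}$, which fit in the closed $90\degree$ wedge bisected by direction $0$. Your parenthetical mention of horizontal segments suggests you had this in mind, but note also that your blanket claim that the slopes $\frac{5\pi}{4}$, $\frac{7\pi}{4}$, $0$, $\pi$ ``pairwise differ by at most $\frac{\pi}{2}$'' is false (e.g.\ $\frac{5\pi}{4}$ versus $0$ differ by $\frac{3\pi}{4}$); only the two specific combinations $\{\frac{7\pi}{4},0,\frac{\pi}{4}\}$ and $\{\frac{5\pi}{4},\pi,\frac{3\pi}{4}\}$ work, and one must take the rightmost/leftmost paths of the subtrees rooted at $u$ and $v$ themselves, not of the children of the root.
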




\begin{proof}
Our proof is by induction on the height $h$ of $T$. Recall that $T$ contains at least one internal vertex, hence $h\geq 1$. In the base case, $h=1$, that is, $T$ is a star. Let $v_1,\dots,v_k$ be the children of $r(T)$ in left-to-right order and note that $k\geq 2$ since the internal vertices of $T_G$ have degree at least $3$. Place $v_1,\dots,v_k$ at the points $(0,0),(2,0),\dots,(2k-2,0)$. Place $r(T)$ at $(k-1,k-1)$. Refer to \cref{fi:wheel}\blue{(bottom)}. The resulting straight-line drawing $\Gamma$ of $G[T]$ clearly satisfies \cref{prop:greedy,prop:convex-grid,prop:points,prop:paths}.


\begin{figure}[tb] 
	\centering
	\subfloat[]{
		\includegraphics[page=3,scale=0.8]{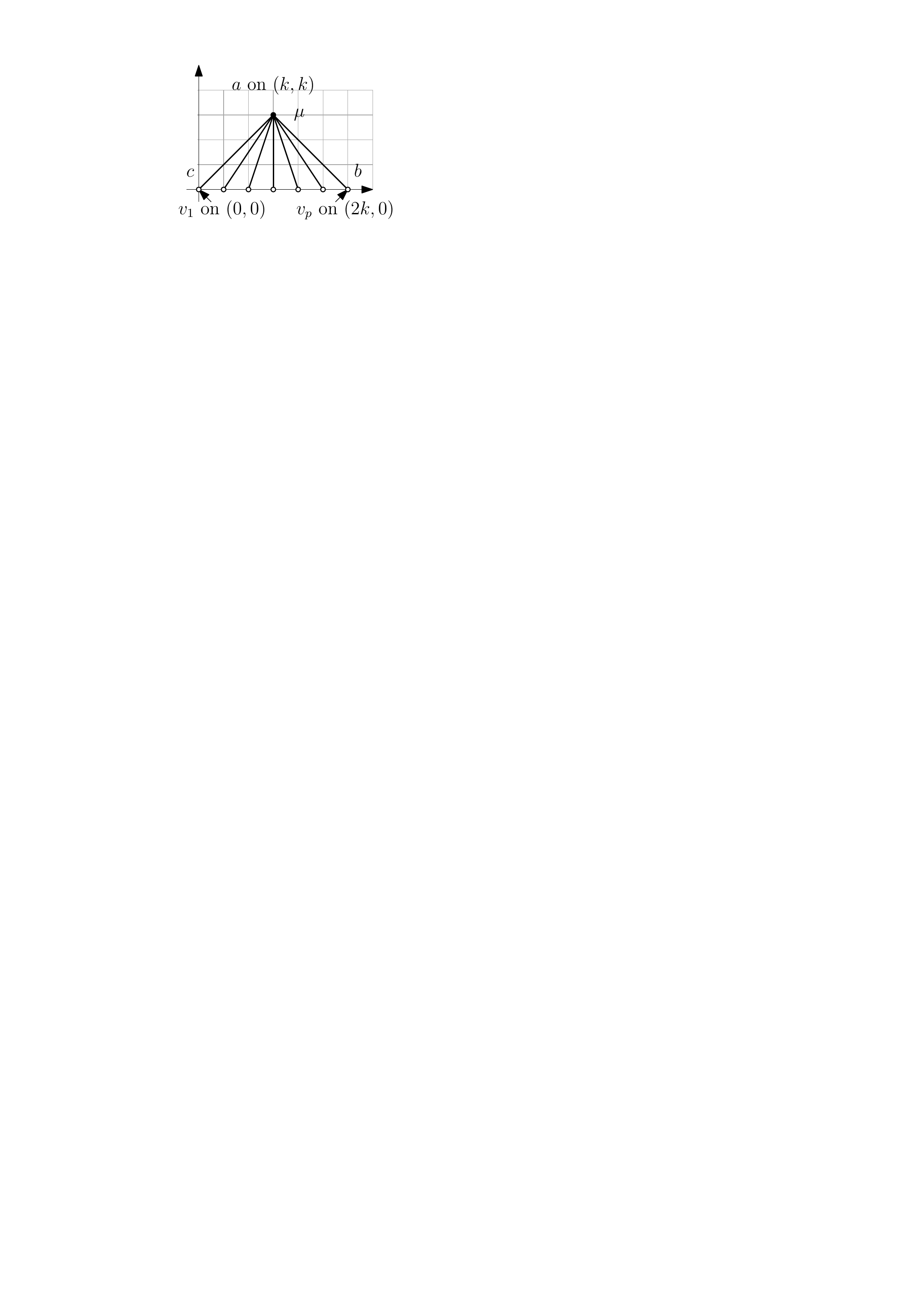}
		\label{fi:wheel}
	}\hfil
	\subfloat[]{
		\includegraphics[page=1,scale=0.5]{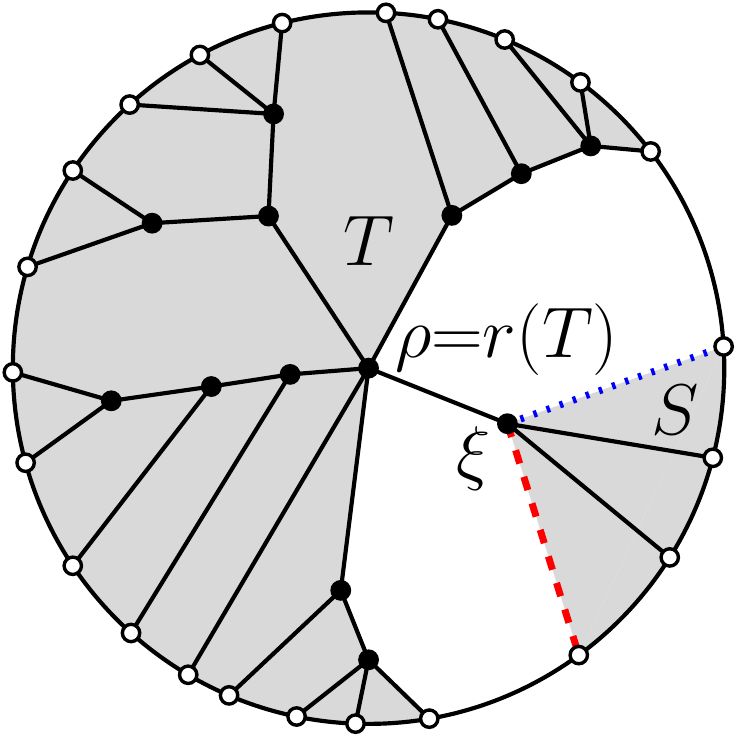}
		\label{fi:halin-a}
	}\hfil
	\subfloat[]{
		\includegraphics[page=2,scale=0.5]{halin-graph-new.pdf}
		\label{fi:halin-b}
	}
	\caption{
	(a) A convex angle-monotone drawing of a wheel on the grid (top) and the base case for the proof of \cref{le:halin-tree} (bottom). (b) The trees $T$ and $S$ for the proof of \cref{th:halin}. (c) The convex angle-monotone drawing $\Gamma_G$ of $G$ constructed from the drawings $\Gamma$ of $G[T]$ and $\overline{\Gamma_S}$ of $G[S]$.}
	\label{fi:halin-graph}
\end{figure}

Suppose now that $h>1$ and refer to \cref{fi:halin-b}. Let $T_1,\dots,T_k$ be the left-to-right order of the subtrees of $T$ rooted at the children of $r(T)$; as in the base case, we have $k\geq 2$. For each $T_i$ which is not a single vertex, assume to have inductively constructed a drawing $\Gamma_i$ of $G[T_i]$ satisfying \cref{prop:greedy,prop:convex-grid,prop:points,prop:paths}. For each $T_i$ which is a single vertex, let $\Gamma_i$ consist of the point $(0,0)$. For $i=1,\dots,k$, let $W_i$ be the width of~$\Gamma_i$. Place the drawings $\Gamma_1,\dots,\Gamma_k$ side by side, so that all their leaves lie on the $x$-axis, so that the leftmost leaf of $T_1$ is at $(0,0)$, and so that, for $i=1,\dots,k-1$, the rightmost leaf of $T_i$ is two units to the left of the leftmost leaf of $T_{i+1}$. We conclude the construction of  $\Gamma$ by placing $r(T)$ at $(\ell(T)-1,\ell(T)-1)$. We have the following.

\begin{claim} \label{cl:halin-subgraph}
$\Gamma$ satisfies \cref{prop:greedy,prop:convex-grid,prop:points,prop:paths}.
\end{claim}

\begin{proof}
\cref{prop:points} holds true since it is inductively satisfied by each drawing $\Gamma_i$ and since, by construction, the rightmost leaf of $T_i$ is two units to the left of the leftmost leaf of $T_{i+1}$, for $i=1,\dots,k-1$. 

Concerning \cref{prop:convex-grid}, we have that $W_\Gamma=\sum_{i=1}^k W_{\Gamma_i} + (k-1) =\sum_{i=1}^k (2\ell(T_i)-1) + (k-1)=2\ell(T)-1$, where we exploited $W_{\Gamma_i}=2\ell(T_i)-1$, which is true by induction. Further, by construction and by induction, each vertex of $T_i$ has a $y$-coordinate between $0$ and $\ell(T_i)-1$. Since $\ell(T_i)<\ell(T)$, the maximum $y$-coordinate of any vertex of $T$ in $\Gamma$ is the one of $r(T)$, hence $H_{\Gamma}=\ell(T)$.

\cref{prop:paths} holds true for each vertex different from $r(T)$ since it is inductively satisfied by each drawing $\Gamma_i$. Further, since $W_{\Gamma}=2\ell(T)-1$, since $r(T)$ lies at $(\ell(T)-1,\ell(T)-1)$, and since the leftmost and rightmost leaves of $T$ lie at $(0,0)$ and $(2\ell(T)-2,0)$, respectively, the slopes of the segments from $r(T)$ to such leaves are $\frac{5\pi}{4}$ and $\frac{7\pi}{4}$, respectively. This implies that the edges of the leftmost path (resp., of the rightmost path) of $T$ have slope $\frac{5\pi}{4}$ (resp., $\frac{7\pi}{4}$), given that the edges of the leftmost path of $T_1$ (resp.\, of the rightmost path of $T_k$) have slope $\frac{5\pi}{4}$ (resp., $\frac{7\pi}{4}$), by induction.

Finally, we prove \cref{prop:greedy}. We first prove that $\Gamma$ is convex. By induction, each internal face of $\Gamma$ which is also a face of $\Gamma_i$, with $i\in \{1,\dots,k\}$, is delimited by a convex polygon. The outer face of $\Gamma$ is delimited by a triangle, by \cref{prop:points,prop:paths}. It remains to prove that each internal face~$f$ incident to $r(T)$ is delimited by a convex polygon. Note that $f$ is delimited by the two edges $(r(T),r(T_i))$ and $(r(T),r(T_{i+1}))$, for some $i\in \{1,\dots,k-1\}$, by the rightmost path of $T_i$, by the leftmost path of $T_{i+1}$, and by the edge of $G[T]$ connecting the rightmost leaf of $T_i$ with the leftmost leaf of $T_{i+1}$. 

\begin{itemize}
	\item The angle of $f$ at $r(T)$ is at most $\frac{\pi}{2}$, by \cref{prop:paths}.
	\item The angles of $f$ at the internal vertices of the rightmost path of $T_i$ or of the leftmost path of $T_{i+1}$ are exactly $\pi$, by \cref{prop:paths}.
	\item The angle of $f$ at the rightmost leaf of $T_i$ (resp., at the leftmost leaf of $T_{i+1}$) is $\frac{3\pi}{4}$ if $T_i$ (resp., $T_{i+1}$) is not a single vertex or at most $\frac{3\pi}{4}$ otherwise, by \cref{prop:points,prop:paths}.
	\item The angle of $f$ at $r(T_i)$ is larger than or equal to $\frac{\pi}{2}$ and smaller than $\pi$; namely, the slope of the edge $(r(T),r(T_i))$ is in the interval $[\frac{5\pi}{4},\frac{7\pi}{4})$, by \cref{prop:paths} and by $i<k$; further, the slope of the edge of the rightmost path of $T_i$ incident to $r(T_i)$ is $\frac{7\pi}{4}$, by \cref{prop:paths}.
	\item Symmetrically, the angle of $f$ at $r(T_{i+1})$ is larger than or equal to $\frac{\pi}{2}$ and smaller than $\pi$.
\end{itemize}

We now prove that $\Gamma$ is angle-monotone. Let $u$ and $v$ be any two vertices of $T$. If $u$ and $v$ both belong to the same subtree $T_i$ of $T$, for some $i\in \{1,\dots,k\}$, then a $\beta$-monotone path between $u$ and $v$ exists in $\Gamma$ since it exists in $\Gamma_i$, by induction. Otherwise, either $u$ and $v$ belong to distinct subtrees $T_i$ and $T_j$ of $T$, or one of $u$ and $v$ is $r(T)$.

In the former case, suppose w.l.o.g.\ that $i<j$. Let $P$ be the path from $u$ to~$v$ consisting of: (i) the rightmost path $P_u$ of $T_i(u)$; (ii) the path $P_{uv}$ in $G[T]$ from the rightmost leaf of $T_i(u)$ to the leftmost leaf of $T_j(v)$ that only passes through leaves of $T$; and (iii) the leftmost path $P_v$ of $T_j(v)$. Since the edges of $P_u$ (which are traversed in the direction of $P_u$) and those of $P_v$ (which are traversed in the direction opposite to the one of $P_v$) have slope $\frac{7\pi}{4}$ and $\frac{\pi}{4}$, by \cref{prop:paths}, and the edges of $P_{uv}$ have slope $0$, by \cref{prop:points}, we have that $P$ is $0$-monotone.

In the latter case, suppose w.l.o.g.\ that $v=r(T)$ and that $u\in V(T_i)$, for some $i\in \{1,\dots,k\}$. By \cref{prop:paths}, all the edges of the path from $u$ to $v$ in $T$ have slope in the closed interval $[\frac{\pi}{4},\frac{3\pi}{4}]$, hence such a path is $\frac{\pi}{2}$-monotone. 

It follows that $\Gamma$ is angle-monotone; this concludes the proof of the claim.\end{proof}

\cref{cl:halin-subgraph} concludes the proof of the lemma.
\end{proof}

We are now ready to prove \cref{th:halin}. We construct a drawing $\Gamma_G$ of $G$ as follows; refer to \cref{fi:halin-b}. First, we initialize $\Gamma_G$ to the drawing $\Gamma$ of $G[T]$ obtained by applying \cref{le:halin-tree}. Further, we apply \cref{le:halin-tree} a second time in order to construct a drawing $\Gamma_S$ of $G[S]$. Let $\overline{\Gamma_S}$ be the drawing of $G[S]$ obtained by rotating $\Gamma_S$ by $\pi$ radians. 
We translate $\overline{\Gamma_S}$ so that $\xi$ lies one unit above~$\rho$. 
Further, we draw the edge $(\rho,\xi)$ as a vertical straight-line segment.
Finally, we draw the edge between the leftmost (rightmost) leaf of $S$ and the rightmost (leftmost) leaf of $T$ as a straight-line segment.

We have the following claim, which concludes the proof of \cref{th:halin}.

\begin{claim} \label{cl:halin-final}
	$\Gamma_G$ is a convex angle-monotone drawing of $G$ on an $O(n) \times O(n)$ grid. 
\end{claim}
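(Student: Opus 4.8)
The claim combines three separate assertions about $\Gamma_G$: that it is planar and convex, that it is angle-monotone, and that it lies on an $O(n)\times O(n)$ grid. I would treat each in turn, leaning heavily on \cref{le:halin-tree} for the two ingredient drawings $\Gamma$ (of $G[T]$) and $\overline{\Gamma_S}$ (of $G[S]$, the $\pi$-rotation of $\Gamma_S$).

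First, the grid-size bound. By \cref{prop:convex-grid} of \cref{le:halin-tree}, $\Gamma$ lies on a $(2\ell(T)-1)\times \ell(T)$ grid and $\Gamma_S$ on a $(2\ell(S)-1)\times\ell(S)$ grid; rotating by $\pi$ preserves the bounding-box dimensions. Since $\ell(T)+\ell(S)\le \ell(T_G)+1\le n$ (every leaf of $T_G$ is a vertex of $G$, and $\xi,\rho$ account for the $+1$ overlap), both drawings have width and height $O(n)$. Stacking $\overline{\Gamma_S}$ one unit above $\rho$ and drawing $(\rho,\xi)$ vertically only adds $O(n)$ to the height; all coordinates stay integral because \cref{le:halin-tree} produces grid drawings and we translate by integer vectors. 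Hence $\Gamma_G$ is an $O(n)\times O(n)$ grid drawing.

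Second, planarity and convexity. Here I would use the geometric normalization baked into \cref{le:halin-tree}: each of $\Gamma$ and $\Gamma_S$ has all its leaves on a horizontal line, sits in the closed half-plane above that line (by \cref{prop:points} and the fact that all $y$-coordinates are nonnegative), and its outer boundary below the root consists of the leftmost path (slope $\tfrac{5\pi}{4}$) on the left, the leaf-line in the middle, and the rightmost path (slope $\tfrac{7\pi}{4}$) on the right. So $\Gamma$ occupies a region bounded above by these two slanted paths meeting at $r(T)=\rho$, and $\overline{\Gamma_S}$, after rotation, occupies a congruent region bounded \emph{below} by two slanted paths meeting at $\xi$, with its leaf-line on top. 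Placing $\xi$ one unit above $\rho$, the two regions meet only near the edge $(\rho,\xi)$ and near the two new edges joining the extreme leaves of $S$ to the extreme leaves of $T$ — these are exactly the edges of the outer face of $G$ that close up the ``cycle of leaves.'' I would verify non-crossing by checking that $\overline{\Gamma_S}$ lies entirely in the open half-plane $y>\yy(\rho)$ except at $\xi$ (immediate from the rotation and the one-unit offset, since $\Gamma_S$ has nonnegative $y$-coordinates) while $\Gamma$ lies entirely in $y\le \yy(\rho)$. For convexity, the internal faces inherited from $\Gamma$ and $\Gamma_S$ are convex by \cref{prop:greedy}; the two remaining internal faces are the quadrilateral-like faces bounded by $(\rho,\xi)$, a portion of the leftmost (resp. rightmost) path of $T$, the joining edge, and a portion of the rightmost (resp. leftmost) path of $S$ — and a short angle audit at each corner, using the known slopes $\{\tfrac{5\pi}{4},\tfrac{7\pi}{4}\}$ from \cref{prop:paths} and $\tfrac{\pi}{2}$ for $(\rho,\xi)$, shows every interior angle is $\le\pi$; the outer face is bounded by the two outermost slanted paths of $T$, the two join edges, and the two outermost slanted paths of $S$, forming a (convex) hexagon-like polygon.

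Third, angle-monotonicity. For a pair $u,v$ both inside $\Gamma$ or both inside $\overline{\Gamma_S}$, a $\beta$-monotone path exists by \cref{le:halin-tree} applied to that piece (rotation by $\pi$ turns a $\beta$-monotone path into a $(\beta+\pi)$-monotone one). The interesting case is $u\in T$, $v\in S$ (or one of them is $\rho$ or $\xi$). Here I would mimic the ``former case'' argument in the proof of \cref{cl:halin-subgraph}: route from $u$ up the rightmost path of $T(u)$ (slope $\tfrac{7\pi}{4}$), then along leaves of $T$ (slope $0$) to the extreme leaf, across the join edge, then along leaves of $S$, then down into $\overline{\Gamma_S}$ along a monotone path to $v$ — and show the whole concatenation stays within a fixed $90\degree$ wedge; symmetrically one could route through $\rho,\xi$ using the edge $(\rho,\xi)$ of slope $\tfrac{\pi}{2}$. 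I expect the main obstacle to be exactly this last step: verifying that \emph{one} choice of $\beta$-wedge accommodates a path whose first leg climbs inside $\Gamma$, whose middle legs are horizontal along two different leaf-lines at different heights, and whose last leg descends inside the $\pi$-rotated $\Gamma_S$ — one has to be careful that the ``descending'' slopes in $\overline{\Gamma_S}$ (namely $\tfrac{5\pi}{4}+\pi$ and $\tfrac{7\pi}{4}+\pi$, i.e. $\tfrac{\pi}{4}$ and $\tfrac{3\pi}{4}$) are compatible with the ascending slope $\tfrac{7\pi}{4}$ or $\tfrac{\pi}{4}$ used in $\Gamma$, and that going through the edge $(\rho,\xi)$ rather than around the leaf-cycle may be required for some pairs. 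Once the right routing rule is pinned down, each case is a slope check against a $[\beta-\tfrac{\pi}{4},\beta+\tfrac{\pi}{4}]$ interval, and the claim follows.
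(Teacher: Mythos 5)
Your grid-size and convexity arguments line up with the paper's (one small slip: the outer face of $\Gamma_G$ is not the ``hexagon'' bounded by the slanted paths --- those paths bound the two internal faces incident to $(\rho,\xi)$, which you do audit correctly --- but the isosceles trapezoid formed by the two horizontal leaf-lines of $T$ and $S$ and the two join edges; it is trivially convex, so nothing is lost). The genuine gap is in the angle-monotonicity of cross pairs $u\in S$, $v\in T$, which is the only new content of this claim beyond \cref{le:halin-tree}. The routing you propose first --- down the rightmost path of $T(u)$, along the leaf-line, across a join edge, along the other leaf-line, then into the other piece --- cannot be $\beta$-monotone for any $\beta$: its first leg has slope $\tfrac{7\pi}{4}$, its leaf-line legs have slope $0$, and the join edge connects a leaf of $T$ at height $0$ to a leaf of $S$ at height roughly $\ell(T)+\ell(S)$, so it points steeply upward (slope near $\tfrac{\pi}{2}$); slopes $\tfrac{7\pi}{4}$ and $\tfrac{\pi}{2}$ differ by $\tfrac{3\pi}{4}>\tfrac{\pi}{2}$ and cannot lie in one closed $90\degree$ wedge. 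You flag that routing through $(\rho,\xi)$ ``may be required for some pairs'' but leave the rule unpinned; in fact it is required for \emph{all} cross pairs, and it always works, which is exactly the paper's argument.

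Concretely: take $P$ to be the unique path from $u$ to $v$ in the tree $T_G$, i.e.\ $u\to\xi\to\rho\to\cdots\to v$. Since $S$ is a star, the portion in $S$ is at most one edge $(u,\xi)$; in $\Gamma_S$ the root-to-leaf edges have slope in $[\tfrac{5\pi}{4},\tfrac{7\pi}{4}]$, so after the $\pi$-rotation the edge traversed from $u$ into $\xi$ again has slope in $[\tfrac{5\pi}{4},\tfrac{7\pi}{4}]$. The edge $(\xi,\rho)$ has slope $\tfrac{3\pi}{2}$ by construction. Every parent-to-child edge on the path from $\rho$ down to $v$ has slope in $[\tfrac{5\pi}{4},\tfrac{7\pi}{4}]$ by \cref{prop:paths} (each such edge lies angularly between the leftmost edge, of slope $\tfrac{5\pi}{4}$, and the rightmost edge, of slope $\tfrac{7\pi}{4}$, at its parent). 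Hence $P$ is $\tfrac{3\pi}{2}$-monotone. Without this (or an equivalent) verification the claim is not established, since the leaf-cycle detour you lead with fails for every cross pair.
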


\begin{proof}
	First, \cref{prop:convex-grid,prop:paths} of \cref{le:halin-tree} ensure that the width of $\Gamma_G$ is equal to $\max(2\ell(T)-1,2\ell(S)-1)$ and that the height of $\Gamma_G$ is equal to $\ell(T)+\ell(S)$. Both such values are in $O(n)$.
	
	Second, we prove that $\Gamma_G$ is convex. Every face of $\Gamma_G$ which is also a face of~$\Gamma$ or $\overline{\Gamma_S}$ is delimited by a convex polygon since $\Gamma$ and $\overline{\Gamma_S}$ are convex, by \cref{prop:greedy} of \cref{le:halin-tree}. Further, by \cref{prop:points,prop:paths} of \cref{le:halin-tree}, the outer face of $\Gamma_G$ is delimited by an isosceles trapezoid. Finally, consider any face $f$ incident to the edge $(\rho,\xi)$. By \cref{prop:paths} of \cref{le:halin-tree}, the angles of $f$ incident to the internal vertices of the leftmost and rightmost paths of $T$ are equal to $\pi$, hence $f$ is delimited by a quadrilateral $Q$; the angles of $f$ incident to $\rho$ and to $\xi$ are $\frac{3\pi}{4}$, again by \cref{prop:paths} of \cref{le:halin-tree} and since the edge $(\rho,\xi)$ is vertical, hence the remaining two angles of $Q$ sum up to $\frac{\pi}{2}$. It follows that $Q$ is convex.
	
	Third, we prove that $\Gamma_G$ is angle-monotone. Let $u$ and $v$ be any two vertices of $G$. If~$u$ and $v$ both belong to $T$ or both belong to $S$, then a $\beta$-monotone path between $u$ and $v$ exists in $\Gamma_G$ since it exists in $\Gamma$ or in $\overline{\Gamma_S}$, respectively, by \cref{le:halin-tree}. Otherwise, we can assume that $u$ belongs to $S$ and that $v$ belongs to $T$.
	Then the path $P$ from $u$ to $v$ in $T_G$ is $\frac{3\pi}{2}$-monotone. Namely, by \cref{prop:paths} of \cref{le:halin-tree}, the edge of $P$ in $S$, if any, has slope in the interval $[\frac{5\pi}{4},\frac{7\pi}{4}]$; further, by construction, the edge $(\xi,\rho)$ has slope $\frac{3\pi}{2}$; finally, again by \cref{prop:paths} of \cref{le:halin-tree}, all the edges of $P$ in $T$, if any, have slope in the interval $[\frac{5\pi}{4},\frac{7\pi}{4}]$. 
\end{proof}

\section{$\bf\alpha$-Schnyder Drawings of Plane Triangulations}\label{sse:schnyder}

In this section, we prove an exponential lower bound for the area requirements of $\alpha$-Schnyder drawings of plane triangulations, for any fixed $\alpha<\pi/3$. 
For a function $f(n)$ and a parameter $\epsilon >0$, we write $f(n) \in \Omega_\epsilon(n)$ if $f(n) \geq c_\epsilon n$ for some constant~$c_\epsilon>0$ which only depends on $\epsilon$.

\begin{figure}[tb]
	\subfloat[]{
		\includegraphics[scale=.55]{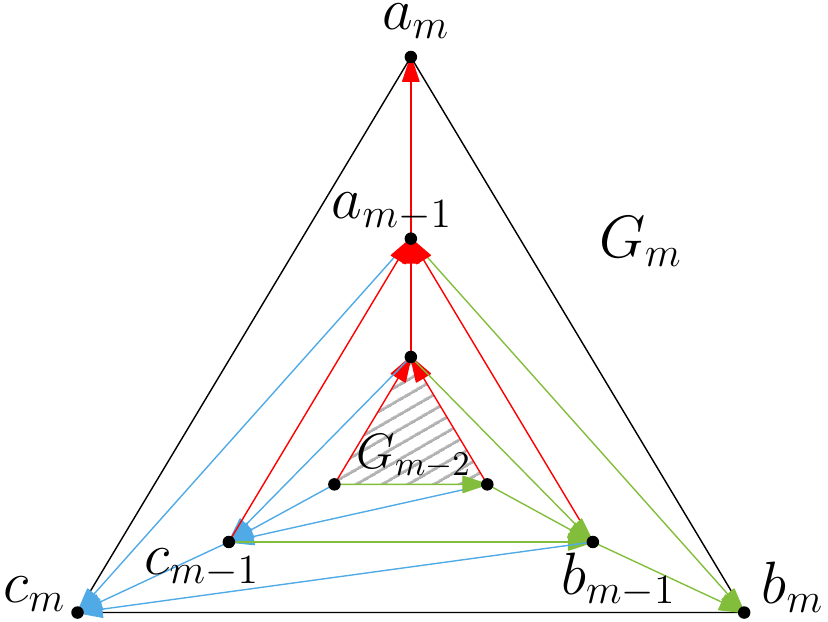}
		\label{fi:alpha-graphs}
	}
	\hfil
\subfloat[]{
	\includegraphics[scale=.5,page=2]{alpha-graphs.pdf}
	\label{fig:wedges}
}
	\caption{
		Illustrations for the proof of \cref{th:schnyder}:
		(a) The graph $G_m$.
		(b) The different angular widths of an $\alpha$-Schnyder drawing, with $\alpha < \frac{\pi}{3}$.}
	\label{fi:schnyder-lower}
\end{figure}

\begin{theorem} \label{th:schnyder}
There exists an infinite family $\mathcal F$ of bounded-degree planar $3$-trees such that, for any resolution rule, any \mbox{$n$-vertex} graph in $\mathcal F$ requires $2^{\Omega_\epsilon(n)}$ area in any \mbox{$(\frac{\pi}{3}$$-$$\epsilon)$-Schnyder} drawing, for any fixed $0 < \epsilon < \frac{\pi}{3}$.
\end{theorem}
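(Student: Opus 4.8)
\noindent The plan is to exhibit an infinite family $\mathcal F=\{G_m:m\ge 1\}$ of bounded-degree planar $3$-trees such that each $G_m$ contains a nested chain of $\Theta(n)$ triangles, and then to show that the requirement $\alpha=\tfrac{\pi}{3}-\epsilon<\tfrac{\pi}{3}$ forces every triangle in the chain to be smaller --- by a factor bounded away from $1$ --- than one of its recent predecessors; since the innermost triangle cannot be arbitrarily small, the outermost one, hence the whole drawing, must be exponentially large.

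\paragraph{The family.} We build $G_m$ from an initial triangle $a_1a_2a_3$ (taken as the outer face) by performing $m$ \emph{insertions}, each adding a new vertex into a triangular face and joining it to the three vertices of that face; after each insertion we recurse inside one of the three newly created sub-faces. To keep the maximum degree bounded we designate the three vertices of the current triangle as its \emph{top}, \emph{bottom-right}, and \emph{bottom-left} vertex, and we choose the sub-face to recurse into by cycling, with period three, through the rules ``replace the bottom-left vertex'', ``replace the top vertex'', ``replace the bottom-right vertex''; with this round-robin schedule each vertex is a vertex of only $O(1)$ of the intermediate triangles and hence acquires only $O(1)$ neighbours. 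The resulting graph $G_m$ is a planar $3$-tree with $n=m+3$ vertices and bounded degree, and the construction produces a sequence $T_0\supsetneq T_1\supsetneq\dots\supsetneq T_m$ of triangles, where $T_i$ is spanned by the three vertices of the $i$-th intermediate triangular face, $T_i$ is the sub-triangle of $T_{i-1}$ cut off by the $i$-th inserted vertex, and $T_m$ is a genuine internal face of $G_m$; moreover every edge of every $T_i$, and every edge joining the $i$-th inserted vertex to the three vertices of $T_{i-1}$, is an edge of $G_m$. We equip $G_m$ with the Schnyder wood defined recursively by the insertions: when a vertex $v$ is inserted into a face with roles (top, bottom-right, bottom-left), orient the three edges of $v$ toward these vertices with colors $1$, $2$, $3$, respectively, and propagate the roles to the three sub-faces; verifying Properties~(1) and~(2) is routine. (See \cref{fi:alpha-graphs}.)

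\paragraph{The contraction lemma.} Fix any $(\tfrac{\pi}{3}-\epsilon)$-Schnyder drawing $\Gamma$ of $G_m$. By definition, the three outgoing edges of an internal vertex point in directions lying in cones of angular width $\alpha=\tfrac{\pi}{3}-\epsilon$ centred at three directions $\tfrac{2\pi}{3}$ apart; since $\alpha<\tfrac{\pi}{3}$, any two outgoing edges of a vertex form an angle in $[\tfrac{2\pi}{3}-\alpha,\,\tfrac{2\pi}{3}+\alpha]=[\tfrac{\pi}{3}+\epsilon,\,\pi-\epsilon]$, bounded away from $0$ and $\pi$ (see \cref{fig:wedges}). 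Because the colors of the edges of $T_i$ and of the three edges at the $(i+1)$-st inserted vertex are fixed by the recursive Schnyder wood, the direction of each of these edges is determined up to $\pm\tfrac{\alpha}{2}$; a careful case analysis on these directions then shows that the inserted vertex is confined to a subregion of $T_i$ bounded away from all of $T_i$ --- intuitively, the narrow cones push the inserted vertex away from one corner of $T_i$, leaving a definite fraction of $T_i$'s area outside the chosen sub-triangle --- and, more precisely, that over a bounded number $t$ of consecutive insertions a constant fraction of the area is lost:
\[
	\operatorname{area}(T_{i+t})\ \le\ \lambda^2\cdot\operatorname{area}(T_i),\qquad\text{where }\ \lambda=\lambda(\epsilon)\in(0,1).
\]

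\paragraph{Conclusion.} Applying this inequality for $i=0,t,2t,\dots$ and using $T_m\subseteq T_{t\lfloor m/t\rfloor}$, we get $\operatorname{area}(T_m)\le\lambda^{2\lfloor m/t\rfloor}\operatorname{area}(T_0)\le 2^{-\Omega_\epsilon(m)}\operatorname{area}(T_0)\le 2^{-\Omega_\epsilon(m)}\operatorname{area}(\Gamma)$, so that $\operatorname{area}(\Gamma)\ge 2^{\Omega_\epsilon(m)}\operatorname{area}(T_m)$. Since $T_m$ is a genuine internal face of $G_m$, any resolution rule guarantees $\operatorname{area}(T_m)\ge c$ for a constant $c>0$ --- this is the standard step (as in \cite{css-osggrep-09}) that turns an exponential area ratio into an exponential area lower bound valid for any resolution rule; for instance, on a grid $\operatorname{area}(T_m)\ge\tfrac12$. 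As $n=m+3$, this yields $\operatorname{area}(\Gamma)=2^{\Omega_\epsilon(n)}$ and proves \cref{th:schnyder}.

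\paragraph{Main obstacle.} The crux is the contraction lemma: fixing the round-robin schedule and the constant $t$ so that the maximum degree stays $O(1)$ while the nested chain stays $\Theta(n)$ long; writing the recursive Schnyder wood of $G_m$ down explicitly so that the color, and hence the $\pm\tfrac{\alpha}{2}$-direction, of every edge appearing in the analysis is known; and carrying out the trigonometry that converts ``all relevant edges point within arcs of angular width strictly less than $\tfrac{\pi}{3}$'' into a uniform constant-factor loss of area --- this last computation, together with checking that the same local estimate survives for every Schnyder wood of $G_m$ and not only the canonical one, is where the real work lies. The bounded-degree bookkeeping and the passage from an area ratio to a resolution-rule-independent lower bound are routine.
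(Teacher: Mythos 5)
Your overall strategy---a nested chain of $\Theta(n)$ triangles in a bounded-degree planar $3$-tree, a per-level multiplicative area loss forced by the cones of width $\alpha=\frac{\pi}{3}-\epsilon$, and a resolution-rule floor on the innermost triangle---is exactly the strategy of the paper (which nests triangles outward, three vertices at a time, rather than stacking one vertex at a time with a round-robin schedule). However, the proposal has a genuine gap at its core: the ``contraction lemma'' $\operatorname{area}(T_{i+t})\le\lambda^2\operatorname{area}(T_i)$ is asserted, not proved, and you yourself flag it as ``where the real work lies.'' This lemma is the entire technical content of the theorem. The paper's \cref{cl:triangles-area} carries it out: it pins down the color and direction of every relevant edge via \cref{cl:realizations}, then runs a chain of triangle comparisons (intersecting the edge $(a_i,b_i)$ with lines of extremal slope through $c_i$, bounding one sub-triangle by a right triangle and another by the law of sines) to obtain the explicit constant $k_\epsilon = 1 + \frac{2\sin(\epsilon)}{\sin(\frac{2\pi}{3}-\epsilon)}>1$. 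Note that the single-step contraction is \emph{not} obvious---in your construction the inserted vertex can sit arbitrarily close to the corner being replaced, so that one insertion loses almost no area---which is presumably why you retreat to ``a bounded number $t$ of consecutive insertions''; but then the case analysis over the round-robin roles and over all admissible edge directions still has to be done, and nothing in the proposal does it. A proof plan that defers exactly this step has not proved the theorem.

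A second, smaller gap: you fix the Schnyder wood with $T_0$ as the outer face and assume the drawing realizes the chain with $T_0$ outermost and $T_m$ as a bounded internal face. An $\alpha$-Schnyder drawing is only required to be a planar straight-line drawing satisfying the direction constraints; nothing forces it to put $T_0$ on the outer boundary, and if the drawing inverts the nesting your inequality points the wrong way. The paper handles this explicitly by taking \emph{two} copies of $G_m$ glued onto two faces of $K_4$ to form $F_n$: in any planar drawing at least one copy is nested the intended way, and the unique Schnyder wood of the resulting plane $3$-tree restricts correctly to that copy. You would need an analogous device (or an argument that your $G_m$ admits essentially one planar embedding up to the relevant symmetries) to close this.
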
 

\begin{proof}
We start by defining, for each integer $m>0$, a $3m$-vertex plane \mbox{$3$-tree~$G_m$;} refer to \cref{fi:alpha-graphs}. The plane $3$-tree $G_1$ is a cycle $(a_1,b_1,c_1)$. For any integer $m>1$, the plane $3$-tree $G_{m}$ is obtained from the plane $3$-tree $G_{m-1}$ by embedding a cycle $(a_m,b_m,c_m)$ in the outer face of $G_{m-1}$, so that it contains $G_{m-1}$ in its interior, and by inserting the edges $(a_m,a_{m-1})$, $(b_m,a_{m-1})$, $(b_m,b_{m-1})$, $(c_m,a_{m-1})$, $(c_m,b_{m-1})$, and $(c_m,c_{m-1})$. 
	
Since $G_m$ is a plane $3$-tree, it has a unique Schnyder wood $(\mathcal T_1,\mathcal T_2,\mathcal T_3)$; see~\cite{b-or-00,DBLP:journals/combinatorics/FelsnerZ08}. Assume, w.l.o.g., that all the internal edges incident to $a_m$, to $b_m$, and to $c_m$ belong to $\mathcal T_1$, $\mathcal T_2$, and $\mathcal T_3$, respectively. We have the following.

\begin{claim} \label{cl:realizations} 
For every $1\leq k<m$, the Schnyder wood $(\mathcal T_1,\mathcal T_2,\mathcal T_3)$ of $G_m$ satisfies the following properties:

\begin{enumerate}[(a)]
	\item The edges $(a_k,b_k)$ and $(a_{k},c_k)$ belong to $\mathcal T_1$ and are directed towards~$a_k$ and the edge $(b_k,c_k)$ belongs to $\mathcal T_2$ and is directed towards $b_k$.
	\item The edge $(a_{k},a_{k+1})$ belongs to $\mathcal T_1$ and is directed towards~$a_{k+1}$, the edges $(b_{k+1},a_{k})$ and $(b_{k+1},b_{k})$ belong to $\mathcal T_2$ and are directed towards~$b_{k+1}$, and the edges $(c_{k+1},a_{k})$, $(c_{k+1},b_{k})$, and $(c_{k+1},c_{k})$ belong to $\mathcal T_3$ and are directed towards~$c_{k+1}$.
\end{enumerate}
\end{claim}

\begin{proof}
We prove the claim by reverse induction on $k$. If $k=m-1$, then Property~(2) of the definition of Schnyder wood directly implies that Property~(b) is satisfied. Further, by Property~(1) of the definition of Schnyder wood we have that all the edges (including $(a_k,b_k)$ and $(a_{k},c_k)$) that appear after  $(a_k,b_{k+1})$ and before $(a_k,c_{k+1})$ in clockwise order around $a_k$ belong to $\mathcal T_1$ and are directed towards~$a_{k}$; again by the same property, all the edges (including $(b_k,c_k)$) that appear after  $(b_k,c_{k+1})$ and before $(b_k,a_{k})$ in clockwise order around $b_k$ belong to $\mathcal T_2$ and are directed towards~$b_{k}$, hence Property~(a) is satisfied.

Now inductively assume that Properties~(a) and~(b) are satisfied for some $2\leq k\leq m-1$; we prove that they are satisfied for $k-1$, as well. By Property~(1) of the definition of Schnyder wood we have that the edge $(a_k,a_{k-1})$  belongs to $\mathcal T_1$ and is directed towards~$a_{k}$, since it appears after  $(a_k,b_k)$ and before $(a_k,c_{k})$ in clockwise order around $a_k$. 
By the same property, we have that the edges $(c_{k},a_{k-1})$, $(c_{k},b_{k-1})$, and $(c_{k},c_{k-1})$ belong to $\mathcal T_3$ and are directed towards~$c_{k}$, since they appear after $({c_{k},a_k})$ and before $({c_{k},b_k})$ in clockwise order around $c_k$.
Again by the same property, the edges $(b_{k},a_{k-1})$ and $(b_{k},b_{k-1})$ belong to $\mathcal T_2$ and are directed towards~$b_{k}$, since they appear after $(c_k,b_k)$ and before $(a_k,b_k)$ in clockwise order around $b_k$. This concludes the proof that Property~(b) is satisfied.
The proof that Property~(a) is satisfied is the same as in the base case.
\end{proof}

We now prove that, for any fixed $\alpha=\frac{\pi}{3} -\epsilon$ with $\epsilon >0$,  any $\alpha$-Schnyder drawing $\Gamma$ of $G_m$ (respecting the plane embedding of $G_m$) requires $2^{\Omega_\epsilon(m)}$ area. To this aim, we exploit the next claim.

\begin{restatable}{claim}{clTrianglesArea}\label{cl:triangles-area}
In $\Gamma$, the area of the triangle  $(a_i,b_i,c_i)$ is at least $k_\epsilon$ times the area of the triangle  $(a_{i-1},b_{i-1},c_{i-1})$, for any $i=2,\dots,m-1$, where $k_\epsilon>1$ is a constant only depending on $\epsilon$.
\end{restatable}

\begin{proof}
	Let $A_{i-1}$ and $A_i$ denote the areas of the triangles $\Delta_{i-1}=(a_{i-1},b_{i-1},c_{i-1})$ and $\Delta_i = (a_i,b_i,c_i)$ in $\Gamma$, respectively. 
	We prove that $\frac{A_i}{A_{i-1}} \geq k_\epsilon$, where  $k_\epsilon>1$ is a constant that depends only on $\epsilon$; refer to \cref{cl:triangles-area-a}.
	
	In the following, given a pair of elements $x$ and $y$, each representing either a point or a vertex in $\Gamma$, we will denote by $\overline{x y}$ and by $\ell(x,y)$ the segment connecting $x$ and $y$ in $\Gamma$, and the line passing through both $x$ and $y$ in $\Gamma$, respectively.
	
	The following statements exploit \cref{cl:realizations} and the fact that $\Gamma$ is an $\alpha$-Schnyder drawing (refer to \cref{fig:wedges}):
	\begin{itemize}
		\item the slopes of the edges $(b_i,a_i)$ and $(c_i,a_i)$ are in the range $[\frac{\pi}{2}-\frac{\alpha}{2},\frac{\pi}{2}+\frac{\alpha}{2}]$;
		\item the slope of the edge $(b_i,c_i)$ is in the range $[\frac{5\pi}{6}-\frac{\alpha}{2},\frac{5\pi}{6}+\frac{\alpha}{2}]$; and
		\item the slope of the edge $(c_i,a_{i-1})$ is in the range $[\frac{\pi}{6}-\frac{\alpha}{2},\frac{\pi}{6}+\frac{\alpha}{2}]$.
	\end{itemize}
	
	Next, we reduce the problem of proving $\frac{A_i}{A_{i-1}}\geq k_\epsilon$ to the one of determining a lower bound for the ratio between the areas of two triangles whose sides (except for one) have fixed slopes. 
	
	Let $o'$ be the intersection point between the edge $(a_i,b_i)$ and the line $\ell(c_{i},a_{i-1})$. 
	Let $A'$ denote the area of the triangle $\Delta' = (c_i,o',b_i)$; see \cref{cl:triangles-area-b}. Observe that $\Delta'$ strictly encloses $\Delta_{i-1}$; therefore, we have that $A'  > A_{i-1}$. 
	It follows that in order to prove that $\frac{A_i}{A_{i-1}} \geq k_\epsilon$, it suffices to prove that $\frac{A_i}{A'} \geq k_\epsilon$.

	Let $o''$ be the intersection point between the edge $(a_i,b_i)$ and the line with slope $\frac{\pi}{6}+\frac{\alpha}{2}$ passing through $c_i$. 
	Let $A''$ denote the area of the triangle $\Delta''=(c_i,o'',b_i)$; see \cref{cl:triangles-area-c}. Since the slope of the segment $\overline{c_i o'}$ is in the range $[\frac{\pi}{6}-\frac{\alpha}{2},\frac{\pi}{6}+\frac{\alpha}{2}]$ (as it contains the drawing of the edge $(c_i,a_{i-1})$), we have that $\Delta''$ encloses $\Delta'$; therefore, $A''  \geq A'$. 
	It follows that in order to prove that $\frac{A_i}{A'} \geq k_\epsilon$, it suffices to prove that $\frac{A_i}{A''} \geq k_\epsilon$.

	Let $o'''$ be the intersection point between the edge $(a_i,b_i)$ and the line with slope $\frac{\pi}{2}-\frac{\alpha}{2}$ passing through $c_i$. 
	Let $A'''$ denote the area of the triangle $\Delta'''=(c_i,o''',o'')$; see \cref{cl:triangles-area-c}.
	Since the slope of the edge $(c_i,a_i)$ is in the range $[\frac{\pi}{2}-\frac{\alpha}{2},\frac{\pi}{2}+\frac{\alpha}{2}]$, we have that $\Delta'''$ is enclosed in the triangle $\Delta^+ = (c_i,a_i,o'')$. Let $A^+$ denote the area of the triangle $\Delta^+$.
	We have that $A''' \leq A^+ = A_i - A''$. Thus, $A_i \geq A'' + A'''$.
	It follows that in order to prove that $\frac{A_i}{A''} \geq k_\epsilon$, it suffices to prove that $\frac{A''+ A'''}{A''} \geq k_\epsilon$, i.e., 	$\frac{A'''}{A''} \geq k_\epsilon -1$.

	Let $o^*$ be the intersection point between the line $\ell(a_i,b_i)$ and the line with slope $\frac{5\pi}{6}-\frac{\alpha}{2}$ passing through $c_i$; see \cref{cl:triangles-area-d}.
	Let $A^*$ denote the area of the triangle $\Delta^*=(c_i,o'',o^*)$. Since the slope of the edge $(c_i,b_i)$ is in the range $[\frac{11\pi}{6}-\frac{\alpha}{2},\frac{11\pi}{6}+\frac{\alpha}{2}]$, we have that $\Delta^*$ encloses $\Delta''$; therefore, we have that $A^*  \geq A''$. It follows that in order to prove that 
	$\frac{A'''}{A''} \geq k_\epsilon -1$, it suffices to prove that
	$\frac{A'''}{A^*} \geq k_\epsilon-1$.
	
	Finally, let $o^\diamond$ be the intersection point between the line $\ell(c_i, o''')$ and the line passing through $o''$ and perpendicular to $\ell(c_i,o'')$. Consider the right triangle $\Delta^\diamond = (c_i,o^\diamond,o'')$; see \cref{cl:triangles-area-d}. We claim that $\Delta'''$ strictly encloses  $\Delta^\diamond$, and thus, $A^\diamond < A'''$, which implies that in order to prove that 
	$\frac{A'''}{A^*} \geq k_\epsilon-1$, it suffices to prove the following main inequality
	\begin{equation}\label{eq:main}
	\frac{A^\diamond}{A^*} \geq k_\epsilon-1.
	\end{equation}
	To show that $\Delta'''$ strictly encloses  $\Delta^\diamond$, we prove that the internal angle of $\Delta'''$ at $o''$ is larger than $\frac{\pi}{2}$. This immediately descends from the fact that the slope of the segment $\overline{o''a_i }$ is in the range $[\frac{\pi}{2}-\frac{\alpha}{2},\frac{\pi}{2}+\frac{\alpha}{2}]$ (as it overlaps with the edge $(b_i,a_i)$ which belongs to $\mathcal T_1$), that the slope of the segment $\overline{o'' c_i }$ is $\frac{7\pi}{6}+\frac{\alpha}{2}$, by construction, and that $\alpha < \frac{\pi}{3}$.

	We are now ready to compute $A^\diamond$ and give an upper bound for $A^*$. In the following, we denote by $h$ the length of the segment $\overline{c_i o''}$; refer to \cref{cl:triangles-area-d}.
	
	\paragraph{Value of $A^\diamond$.}
	Let us denote by $\delta$ the interior angle of $\Delta^\diamond$ at $c_i$. Since, by construction, the slope of the segments $\overline{c_i o^\diamond}$ and $\overline{c_i o''}$ are $\frac{\pi}{2}-\frac{\alpha}{2}$ and $\frac{\pi}{6}+\frac{\alpha}{2}$, respectively, we have that $\delta = \frac{\pi}{3}$. Therefore,
	since $\Delta^\diamond$ is a right triangle whose cathetus incident to $\delta$ is the segment $\overline{c_i o''}$, the following holds
	
	\begin{equation}\label{eq:A-triple}
	A^\diamond = \frac{h^2 \tan(\delta)}{2} = \frac{\sqrt{3}}{2}h^2.
	\end{equation}
	
	\paragraph{Upper bound for $A^*$.}
	Let $\sigma$, $\beta$, and $\gamma$ be the internal angles of the triangle $\Delta^*$ at $c_i$, $o''$, and $o^*$, respectively.
	First, we show that
	\begin{inparaenum}[(i)]
		\item $\sigma = \frac{\pi}{3}+\alpha$,
		\item $\beta \leq \frac{\pi}{3}$, and
		\item $\gamma \geq \frac{\pi}{3}-\alpha$.
	\end{inparaenum} 

	Item (i) is due to the facts:
	\begin{inparaenum}
		\item the angle $\sigma'$ spanned by a clockwise rotation around~$c_i$ bringing a ray originating at~$c_i$ and directed rightward to overlap with $\overline{c_i o^*}$ is $\frac{\pi}{6}+\frac{\alpha}{2}$, given that the slope of $\overline{c_i o^*}$ is $\frac{11\pi}{6}-\frac{\alpha}{2}$, by construction;
		\item the angle $\sigma''$ spanned by a counter-clockwise rotation around~$c_i$ bringing a ray originating at~$c_i$ and directed rightward to overlap with $\overline{c_i o''}$ is $\frac{\pi}{6}+\frac{\alpha}{2}$, given that the slope of $\overline{c_i o''}$ is $\frac{\pi}{6}+\frac{\alpha}{2}$, by construction; and 
		\item $\sigma=\sigma' + \sigma''$.
	\end{inparaenum}

	Item (ii) is due to the following facts:
	\begin{inparaenum}
		\item the angle $\beta'$ spanned by a counter-clockwise rotation around~$o''$ bringing a ray originating at~$o''$ and directed rightward to overlap with $\overline{o'' o^*}$ is at most $\frac{3\pi}{2}+\frac{\alpha}{2}$, given that the slope of $\overline{o'' o^*}$ is in the range $[\frac{3\pi}{2}-\frac{\alpha}{2},\frac{3\pi}{2}+\frac{\alpha}{2}]$ (as it overlaps with the edge $(b_i,a_i)$, which belongs to $\mathcal T_1$);
		\item the angle $\beta''$ spanned by a counter-clockwise rotation around~$o''$ bringing a ray originating at~$o''$ and directed rightward to overlap with $\overline{o'' c_i}$ is $\frac{7\pi}{6}+\frac{\alpha}{2}$, given that the slope of $\overline{o'' c_i}$ is $\frac{7\pi}{6}+\frac{\alpha}{2}$, by construction; and
		\item $\beta = \beta' - \beta''$.
	\end{inparaenum}

	Item (iii) is due to (i), (ii) and of $\sigma+\beta+\gamma=\pi$.
	
	Let $\kappa$ denote the length of the segment $\overline{c_i o^*}$. By the {\em law of sines} applied to the triangle $\Delta^*$, we have that
	$\frac{h}{\sin(\gamma)} = \frac{\kappa}{\sin(\beta)}$, i.e., $\kappa = h \frac{\sin(\beta)}{\sin(\gamma)}$. Applying Items (ii) and (iii), the following holds
	\begin{equation}\label{eq:kappa}
	\kappa \leq h \frac{\sin(\frac{\pi}{3})}{\sin(\frac{\pi}{3}-\alpha)}= h\frac{\sqrt{3}}{2\sin(\frac{\pi}{3}-\alpha)}.
	\end{equation}
	
	Since the segments $\overline{c_i o''}$ and $\overline{c_i o^*}$ form the two sides of $\Delta^*$ whose angle at~$c_i$~is~$\sigma$, we have that $A^* = \frac{h \cdot \kappa \sin(\sigma)}{2}$. Applying \cref{eq:kappa} and Item~(i), we thus have that the following holds
	\begin{equation}\label{eq:A-star}
	A^* \leq \frac{h^2\sqrt{3}}{4} \frac{\sin(\frac{\pi}{3}+\alpha)}{\sin(\frac{\pi}{3}-\alpha)}.
	\end{equation}
	
	\paragraph{Determining $k_\epsilon$.} 
	Combining \cref{eq:A-triple,eq:A-star}, we have that
	$$\frac{A^\diamond}{A^*} \geq 
	\frac{2\sin(\frac{\pi}{3}-\alpha)}{\sin(\frac{\pi}{3}+\alpha)}.
	$$
	By the above inequality, and exploiting the fact that $\frac{\pi}{3}-\alpha = \epsilon$ and $\frac{\pi}{3}+\alpha = \frac{2\pi}{3}-\epsilon$, we 
	have that \cref{eq:main} is satisfied by setting
	\begin{equation}\label{eq:k-epsilon}
	k_\epsilon = 1 + \frac{2\sin(\epsilon)}{\sin(\frac{2\pi}{3}-\epsilon)}.
	\end{equation}
	
	Note that $k_\epsilon$ is a constant, for any fixed $\epsilon$; further, $k_\epsilon>1$, since $\frac{2\sin(\epsilon)}{\sin(\frac{2\pi}{3}-\epsilon)}>0$ with $0 < \epsilon < \frac{\pi}{3}$. This concludes the proof. \end{proof}

\cref{cl:triangles-area} immediately implies that the area of the triangle $(a_{m-1},b_{m-1},c_{m-1})$ is at least $k_\epsilon^{m-2}$ times the area of the triangle $(a_1,b_1,c_1)$. Since the area of the triangle $(a_1,b_1,c_1)$ is greater than some constant depending on the adopted resolution rule, we get that the area of $\Gamma$ is in $2^{\Omega_\epsilon(m)}$.

We are now ready to define the family $\mathcal F$ of the statement. For any positive integer $m$ and any $n=6m-2$, we construct the graph $F_n \in \mathcal F$ from the complete graph $K_4$ on $4$ vertices, by taking two copies $G'_m$ and $G''_m$ of $G_m$ and by identifying the vertices incident to the outer face of each copy with the three vertices incident to two distinct triangular faces of~the~$K_4$. Observe that $F_n$ is a bounded-degree planar $3$-tree. 
In any $\alpha$-Schnyder drawing $\Gamma$ of $F_n$ (in fact in planar drawing of $F_n$), at least one of the two copies of $G_m$, say $G'_m$, is drawn so that its outer face is delimited by the triangle $(a_m,b_m,c_m)$. Since $F_n$ has a unique Schnyder wood~\cite{b-or-00,DBLP:journals/combinatorics/FelsnerZ08}, the restriction of such a Schnyder wood to the internal edges of $G'_m$ satisfies the properties of \cref{cl:realizations}. It follows that the restriction of $\Gamma$ to $G'_m$ is an $\alpha$-Schnyder drawing of $G'_m$ (respecting the plane embedding of $G'_m$), and therefore it requires $2^{\Omega_\epsilon(m)}$ area. The proof is concluded by observing that $m \in \Omega(n)$. 
\end{proof}

	\begin{figure}[ht]
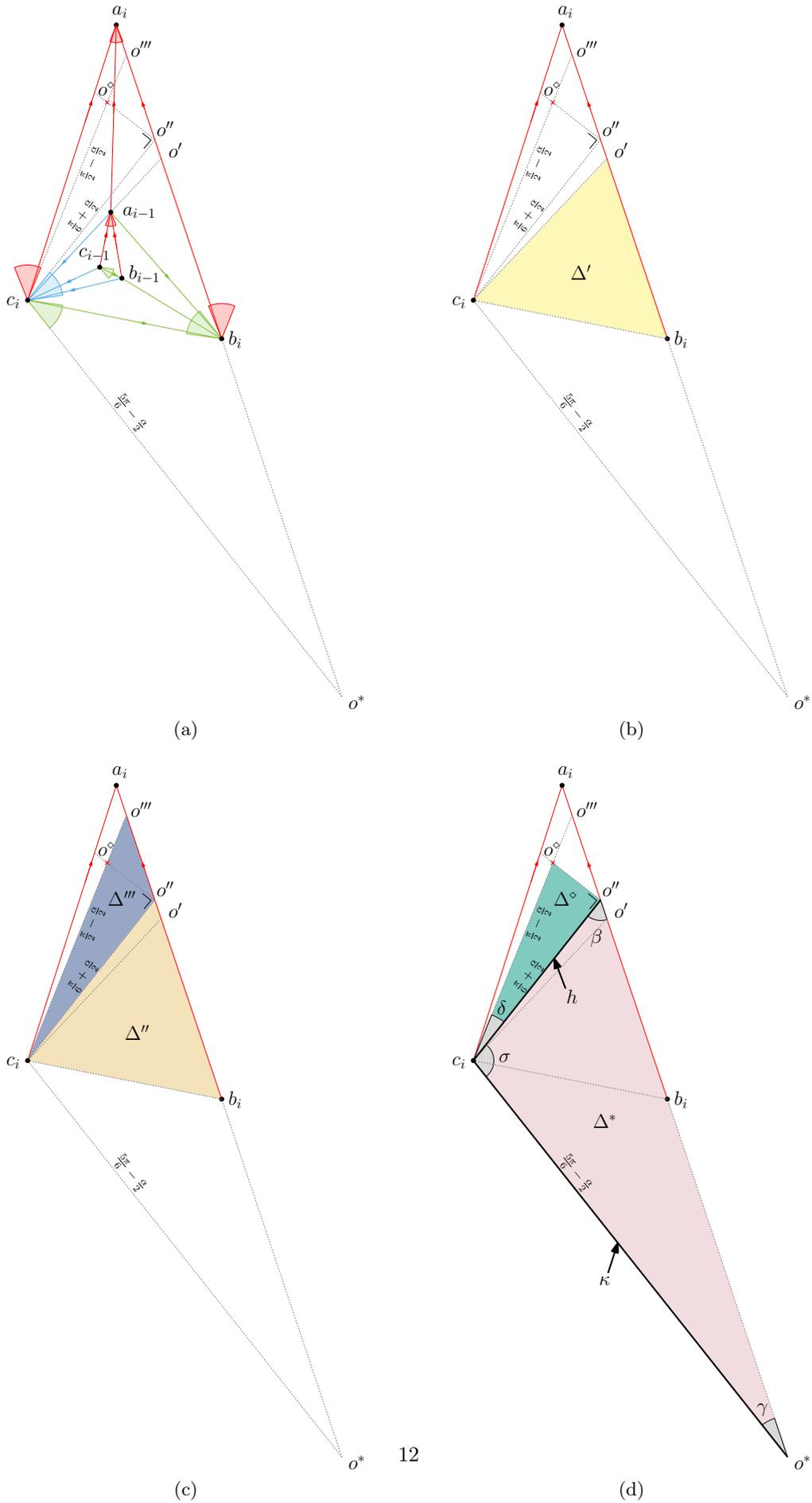

		\centering
		\subfloat[\label{cl:triangles-area-a}]{
			\includegraphics[scale=.65,page=4]{alpha-graphs.pdf}
		}
		\hfil
		\subfloat[\label{cl:triangles-area-b}]{
			\includegraphics[scale=.65,page=5]{alpha-graphs.pdf}
		}
		\\
		\subfloat[\label{cl:triangles-area-c}]{
			\includegraphics[scale=.65,page=6]{alpha-graphs.pdf}
		}
		\hfil
		\subfloat[\label{cl:triangles-area-d}]{
			\includegraphics[scale=.65,page=8]{alpha-graphs.pdf}
		}
		\caption{Illustration for the proof of \cref{cl:triangles-area}. Values in radiants indicate segments' slopes.		
		}
		\label{fig:proof-of-triangles-area}
	\end{figure}

\section{Conclusions and Open Problems} \label{se:conclusions}  

In this paper, we refuted a claim by Cao et al.~\cite{css-osggrep-09} and re-opened the question of whether $3$-connected planar graphs admit planar, and possibly convex, greedy drawings on a polynomial-size grid. Further, we provided some evidence for a positive answer by showing that every $n$-vertex Halin graph admits a convex greedy drawing on an $O(n)\times O(n)$ grid; in fact, our drawings are angle-monotone, which is a stronger property than greediness. Moreover, we proved that $\alpha$-Schnyder drawings, which are an even more constrained drawing standard, might require exponential area for any fixed $\alpha<\frac{\pi}{3}$. 

Several questions remain open in this topic. We mention two of them that seem to be natural next steps. {\bf (Q1)} Does every $2$-outerplanar graph admit a planar, and possibly convex, greedy drawing on a polynomial-size grid? Note that the class of $2$-outerplanar graphs is strictly larger than the one of Halin graphs. {\bf (Q2)} Does every plane $3$-tree admit a planar greedy drawing on a polynomial-size grid? We indeed proved a negative answer if ``greedy drawing'' \mbox{is replaced by ``$\alpha$-Schnyder drawing'', for any fixed $\alpha<\frac{\pi}{3}$.}
	
\bibliographystyle{splncs04}
\bibliography{bibliography}
\end{document}